\newtheorem{thm}{Theorem}
\newtheorem{cor}[thm]{Corollary}
\newtheorem{conj}[thm]{Conjecture}
\newtheorem{lemma}[thm]{Lemma}
\newtheorem{defn}[thm]{Definition}
\newtheorem{obs}[thm]{Observation}
\newcommand\E{\mathbb{E}}
\newcommand\R{\mathbb{R}}
\newcommand\disc{\mathrm{disc}}
\newcommand\vecdisc{\mathrm{vecdisc}}
\begin{document}

\title{Improved Algorithmic Bounds for Discrepancy of Sparse Set Systems}

\author{Nikhil Bansal
\thanks{Department of Mathematics and Computer Science, Eindhoven University of Technology, Netherlands.  
Email:
\href{mailto:n.bansal@tue.nl}{n.bansal@tue.nl}.
Supported by a NWO Vidi grant 639.022.211 and an ERC consolidator grant 617951.}
\and
Shashwat Garg\thanks{Department of Mathematics and Computer Science, Eindhoven University of Technology, Netherlands.  
Email:
\href{mailto:s.garg@tue.nl}{s.garg@tue.nl}.
Supported by the Netherlands' Organisation for Scientific Research (NWO) under project no.~022.005.025.
}
}

\maketitle

\begin{abstract}
We consider the problem of finding a low discrepancy coloring for sparse set systems where each element lies in at most $t$ sets. We give an algorithm that finds a coloring with discrepancy 
$O((t \log n \log s)^{1/2})$ where $s$ is the maximum cardinality of a set. 
This improves upon the previous constructive bound of $O(t^{1/2} \log n)$ based on algorithmic variants of the partial coloring method, and for small $s$ (e.g.~$s=\textrm{poly}(t)$) comes close to  
the non-constructive  $O((t \log n)^{1/2})$ bound due to Banaszczyk. Previously, no algorithmic results better than $O(t^{1/2}\log n)$  were known even for $s = O(t^2)$.
Our method is quite robust and we give several refinements and extensions. 
For example, the coloring we obtain satisfies the stronger size-sensitive property that each set $S$ in the set system incurs an $O((t \log n \log |S|)^{1/2})$ discrepancy. 
Another variant can be used to essentially match Banaszczyk's bound for a wide class of instances even where $s$ is arbitrarily large. 
Finally, these results also extend directly to the more general Koml\'{o}s setting. 
\end{abstract}

\section{Introduction}
Let $(V,\mathcal{S})$ be a finite set system, with $V=\{1,\ldots,n\}$ and $\mathcal{S} = \{S_1,\ldots,S_m\}$ a collection of subsets of $V$. For a two-coloring $\chi: V  \rightarrow \{-1,1\}$, the discrepancy of $\chi$ for a set $S$  is defined as $ \chi(S) =  |\sum_{j\in S} \chi(j) |$ and measures the imbalance from an even-split for $S$.
The discrepancy of the system $(V,\mathcal{S})$ is defined as 
\[ \disc(\mathcal{S}) = \min_{\chi:V \rightarrow \{-1,1\}} \max_{S \in \mathcal{S}} \chi(S). \]
That is, it is the minimum imbalance for all sets in $\mathcal{S}$, over all possible two-colorings $\chi$.

Discrepancy is a widely studied topic and has applications to many areas in mathematics and computer science. For more background we refer the reader to the books \cite{Chazelle,Mat99,Panorama}. In particular, discrepancy is closely related to the problem of rounding  fractional solutions of a linear system of equations to integral ones \cite{LSV,R12}, and is widely studied in approximation algorithms and optimization.

Until recently, most of the results in discrepancy were based on non-algorithmic approaches 
and hence were not directly useful for algorithmic applications. 
However, in the last few years there has been remarkable progress in our understanding of the algorithmic aspects of discrepancy \cite{B10,CNN11,LM12,Ro14,HSS14,ES14,NT15}. In particular, we can now match or even improve upon all known applications of the widely used partial-coloring method \cite{Spencer85,Mat99} in discrepancy.
This has led to several other new results in approximation algorithms \cite{R13,BCKL14,BN15,NTZ13}.


\paragraph{Sparse Set Systems:} 
Despite the algorithmic progress, one prominent question that has remained open is to match the known non-constructive bounds on discrepancy for low degree or sparse set systems. These systems are parametrized by $t$, that denotes the maximum number of sets that contain any element.
 Beck and Fiala \cite{BF81} proved, using an algorithmic iterated rounding approach, that any such set system has discrepancy at most $2t-1$. They also conjectured that the discrepancy in this case is $O(t^{1/2})$. Despite much effort, this remains an elusive open question.

The best known result in this direction is due to Banaszczyk \cite{B97}, who proved a discrepancy bound of $O((t \log n)^{1/2})$. Unlike most results in discrepancy that are based on the partial-coloring method,
Banaszczyk's proof is based on a very different and elegant convex geometric argument, and it is not known how to make it algorithmic.
Prior to Banaszczyk's result, the best known non-algorithmic bound\footnote{Throughout this paper we will assume that $t \geq \log^2 n$, to avoid taking minimum with the $O(t)$ bound every time.}  was $O(t^{1/2} \log n)$ \cite{Srin97}, based on the partial-coloring method. This bound was made algorithmic in \cite{B10}.
The question of matching Banaszczyk's bound algorithmically and several related variants of the problem have received a lot of attention in recent years \cite{N13,B13,ES14,EL15}.

The regime where the above questions are most interesting is perhaps when $t=\log^{\beta} n$ for some moderate value of $\beta$. 
If particular, if $\beta < 1$, then the $O(t)$ Beck-Fiala bound already beats the $O((t \log n)^{1/2})$ bound of Banaszczyk. On the other hand, if $\beta = 1/\epsilon$ for some small $\epsilon$, then the partial coloring bound already gives $O(t^{1/2} \log n) = O(t^{1/2+\epsilon})$, and becomes asymptotically close to the conjectured $O(t^{1/2})$ bound.

\subsection{Our Results}
In this paper we give new algorithmic results that improve upon the previously known algorithmic bounds for the Beck-Fiala problem in various ways.
Our first result is the following:
\begin{thm} 
\label{thm:main}
Given any $t$-sparse set system on $n$ elements, 
there is an efficient algorithm that with high probability, finds a coloring where each set $S \in \mathcal{S}$ incurs an  
$O((t \log n \log |S|)^{1/2})$ discrepancy.

In particular, this gives an $O((t \log n \log s)^{1/2})$ discrepancy coloring, where $s$ is the size of the largest set in the system.
\end{thm}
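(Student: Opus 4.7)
I would proceed by iterative partial coloring in the style of Bansal~\cite{B10} and Lovett--Meka. Starting from the all-zero fractional vector $x^{(0)}=\mathbf{0}\in[-1,1]^V$, the algorithm runs in phases; in each phase, a constrained random walk in the still-fractional coordinates rounds a constant fraction of them to $\pm 1$ while bounding the incremental discrepancy incurred on every set. The walk's step covariance is shaped by an SDP that encodes both the Beck--Fiala $t$-sparsity constraint $\sum_{S\ni j}\|v_S\|_2^2\le t$ \emph{and} a \emph{size-sensitive} slack per set depending on the current alive count $|S_i|$ of $S$ at the start of phase $i$. The key gain over vanilla Bansal/LM is that the incremental discrepancy on $S$ in phase $i$ is controlled by $|S_i|$ rather than by $|S|$ or by the crude $\sqrt{t}$ uniformly.

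The main step is to aggregate across phases. Viewing the concatenation of the phase-walks as a single martingale $M_k=\langle \mathbf{1}_S,\,x_k\rangle$ and invoking a martingale tail bound (Azuma/Freedman), the terminal discrepancy is $O\!\bigl(\sqrt{\bigl(\textstyle\sum_k\sigma_k^2\bigr)\log n}\bigr)$ w.h.p., where $\sigma_k^2$ is the conditional variance of step $k$. It therefore suffices to bound $\sum_k\sigma_k^2\le O(t\log |S|)$. The heuristic is that $t$-sparsity caps each phase's contribution by $t$, but only about $\log|S|$ phases contribute substantially: the early ``bulk'' phases with $n_i\gg|S|$ touch only a $|S|/n_i$ fraction of coordinates relevant to $S$, and these contributions sum geometrically to $O(t)$; the remaining $O(\log |S|)$ phases with $n_i\le|S|$ each contribute at most $t$. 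Combining, the per-set bound is $O(\sqrt{t\log n\,\log|S|})$, and a union bound over the (polynomially many) sets yields the theorem, including the size-sensitive refinement.

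The main obstacle will be establishing the per-phase partial coloring lemma with all three desired properties---constant-fraction rounding of alive coordinates, the $t$-sparsity SDP constraint, and the per-set size-sensitive slack---simultaneously and algorithmically. Following Lovett--Meka, I expect to prove feasibility of the phase SDP via a volume/entropy argument: the intersection of $[-1,1]^{V_i}$ with the slack constraints has sufficiently large Gaussian measure that a Brownian-type walk inside it must exit through a coordinate face before violating any slack constraint, with high probability. The delicate point is controlling how $|S_i|$ shrinks across phases so that the variance sum telescopes as above; this may require adding explicit halving-type constraints on each $|S_i|$, or alternatively a concentration argument using $t$-sparsity showing $|S_i|\lesssim |S|\cdot n_i/n$ w.h.p.\ along the walk.
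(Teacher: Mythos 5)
Your high-level plan matches the paper's template (random walk guided by an SDP, divide into phases, argue each set $S$ only accrues $O(\sqrt t)$ incremental discrepancy for $O(\log|S|)$ phases, finish with Freedman/Azuma plus a union bound), but the mechanism you propose for obtaining the per-set shrinkage is exactly the point the paper identifies as the obstruction, and your proposed fixes would not go through.

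Specifically, the heuristic that early phases touch ``only a $|S|/n_i$ fraction'' of the coordinates in $S$, or that $|S_i|\lesssim |S|\cdot n_i/n$, is false in general for the standard Lovett--Meka/Bansal walk: the entropy/volume argument only certifies a \emph{global} constant-fraction rounding per phase, and a specific set can retain essentially all of its alive elements while still incurring $\Omega(\sqrt t)$ discrepancy in that phase. The paper makes exactly this point when motivating why partial coloring caps out at $O(t^{1/2}\log n)$. Moreover, ``halving-type constraints on each $|S_i|$'' are not discrepancy constraints and do not fit into the Lovett--Meka entropy budget; there are up to $nt$ sets and the entropy argument has total budget $O(n)$, so one cannot simply demand that every set sheds a constant fraction of its alive variables in every phase.

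The missing idea is the paper's \emph{energy-deficit} SDP constraint
\[
\Bigl\|\sum_{i\in S_j} x_{k-1}(i)\,v_i\Bigr\|_2^2 \le 2t \qquad \text{for every set } S_j,
\]
added alongside the usual vector-discrepancy constraint. Feasibility of this augmented SDP does not come from entropy/volume: it comes from Nikolov's theorem that the vector Koml\'os discrepancy of any matrix with column $\ell_2$-norms at most $1$ is at most $1$, applied to the matrix obtained by stacking $A$ with its column-rescaling by $x_{k-1}$ (columns still have $\ell_2$-norm at most $\sqrt{2t}$). This constraint controls the martingale part of the change in $D_S(k)=\sum_{i\in S}(1-x_k(i)^2)$, while the quadratic part drives $D_S$ down by (roughly) the number of alive variables of $S$ per phase, deterministically in expectation. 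That is what yields the per-set exponential decay and hence the $O(\log|S|)$ phase bound. Without a replacement for this step, your argument has a genuine gap.
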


This bound is never worse than the previous $O(t^{1/2} \log n)$ algorithmic bound, and essentially matches Banaszczyk's bound if all sets have size $\textrm{poly}(t)$.
Previously, bounds better than  $O(t^{1/2} \log n)$ were known only for a very restricted range of $s$.
In particular, an $O((s \log t)^{1/2})$ (algorithmic) bound follows by a direct application of the Lov\'{a}sz Local Lemma (LLL). 
This gives an improvement over previous algorithmic bounds for $s \ll \min(t \log^2 n,t^2)$, but is much worse even for moderately larger values of $s$ such as $s=t^{10}$. It is unclear how to avoid the $s^{1/2}$ loss with LLL-based approaches\footnote{For a set of size $s$, a random coloring incurs $\Omega(s^{1/2})$ discrepancy with probability almost $1$.}, and the problem of improving upon the $O(t^{1/2}\log n)$ bound was open even for the case of $s=t^2$.

As a concrete example to summarize the status of the known bounds: for $t = O(\log^2 n)$ and $s = t^2=O(\log^4 n)$, the conjectured Beck-Fiala bound is $O(\log n)$,  the best known non-constructive bound is $O(\log^{3/2} n)$, all previous algorithms incur $\Omega(\log^2 n)$ discrepancy, and Theorem \ref{thm:main} above  achieves $\tilde{O}(\log^{3/2}n)$ where $\tilde{O}(\cdot)$ hides $\textrm{poly}(\log \log n)$ factors.


The algorithm is based on a random walk based approach, similar to that in \cite{B10}, where each step of the walk is guided by a solution to a semidefinite program (SDP). However, as we describe in Section \ref{s:overview}, the previous partial coloring based approaches seem inherently incapable of improving upon the $O(t^{1/2} \log n)$ bound (in fact this is true even for $s=t$, where LLL based techniques perform much better). Moreover, the previous approaches also cannot give a size-sensitive discrepancy bound for each set $S$, as in Theorem \ref{thm:main}. 
A key new idea of our algorithm is to add some extra constraints to the SDP which ensure that {\em every} large enough set makes enough ``progress" towards getting colored, and to use a result of Nikolov \cite{N13} to argue the feasibility of this SDP. 
Roughly, these constraints ensure that a set of size $|S|$ incurs an $O(\sqrt{t})$ discrepancy in expectation for only $O(\log |S|)$ phases. 
We give an overview of the algorithm in Section \ref{s:alg} after defining the relevant background in Section \ref{s:prel}.


While Theorem \ref{thm:main} does not give anything new directly if $s$ is arbitrarily large, our method is quite versatile and readily lends to many extensions. 
These can be used to match or almost match Banaszczyk's bound in many cases even when $s$ is arbitrary.
These are discussed in Theorems \ref{thm:ext1}, \ref{thm:ext2} and their corollaries  below.


\begin{thm}
\label{thm:ext1}
For $i=1,\ldots,\log n$, let $m_i$ denote the number of sets in $\mathcal{S}$ with size in $(2^{i-1},2^{i}]$. Call these class-$i$ sets.
Then, there is an algorithm that finds a coloring with discrepancy $O((t i (\log m_i+\log\log n))^{1/2})$ for each class-$i$ set.
\end{thm}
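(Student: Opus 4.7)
The plan is to run the algorithm of Theorem \ref{thm:main} essentially unchanged and obtain the improved class-dependent bound via a more careful, class-by-class union bound. The $\log n$ factor in Theorem \ref{thm:main} arises from union-bounding the concentration of $\chi(S)$ simultaneously over all sets; stratifying by size class and performing the union bound within each class (at an additive cost of $O(\log\log n)$ for the $O(\log n)$ classes) will yield the refined bound.

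First I would extract from the analysis of Theorem \ref{thm:main} the per-set subgaussian guarantee, separated from the union-bound factor. For each set $S$ of size in $(2^{i-1},2^i]$, I expect the random variable $\chi(S)$ produced by the algorithm to be subgaussian with variance proxy $O(t\,i)$: the factor $O(t)$ per phase reflects $t$-sparsity together with the SDP-guided step, and the factor $O(i)$ is the number of phases during which $S$ remains ``active''. The latter is precisely the guarantee enforced by the large-set progress constraints added to the SDP in Theorem \ref{thm:main}, which ensure that a set of size $2^i$ becomes essentially colored within $O(i)$ phases. Crucially, this variance proxy is independent of $n$; the $\log n$ in Theorem \ref{thm:main} is entirely a union-bound artifact.

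Given this, a standard subgaussian tail inequality yields, for any $\lambda>0$,
\[
\Pr\bigl[\,|\chi(S)| \geq \lambda\sqrt{t\,i}\,\bigr] \leq 2\exp\bigl(-\Omega(\lambda^2)\bigr).
\]
Setting $\lambda = C\sqrt{\log m_i + 2\log\log n}$ for a sufficiently large constant $C$, the per-set failure probability is at most $(m_i \log^2 n)^{-\Omega(C)}$. A union bound over the $m_i$ sets in class $i$ gives failure probability $O(1/\log^2 n)$ per class, and a further union bound over the $O(\log n)$ size classes gives overall failure probability $o(1)$. Hence every class-$i$ set simultaneously attains discrepancy $O\bigl(\sqrt{t\,i\,(\log m_i + \log\log n)}\bigr)$ with high probability, as claimed.

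The main obstacle is making the subgaussian-with-variance-$O(t\,i)$ decomposition of $\chi(S)$ explicit in the analysis of Theorem \ref{thm:main}, independently of $n$. This should follow from the martingale analysis of the SDP-guided random walk: each step contributes a bounded martingale difference to $\chi(S)$ whose magnitude depends only on $t$ (not on $n$), and there are only $O(i)$ relevant phases for a set of size $2^i$ by the large-set progress constraints. Once the per-set subgaussian tail is cleanly separated from the choice of union-bound failure probability, the class-dependent improvement of Theorem \ref{thm:ext1} follows immediately from the refined union bound sketched above.
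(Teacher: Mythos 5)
Your proposal is essentially the paper's own argument: stratify the sets by size class, use the fact that a class-$i$ set becomes safe within $O(i)$ phases, and perform the union bound class by class so that the logarithmic factor becomes $\log m_i + \log\log n$ rather than $\log n$. The paper implements this by applying Corollary~\ref{cor:sixteen} with $\lambda_i = 20(t(\log m_i + \log\log n))^{1/2}$ (to control the probability that a class-$i$ set fails to shrink below $2\lambda_i$ within $3i$ phases) and Lemma~\ref{lem:initial-disc} with $\beta = 3(\log m_i + \log\log n)^{1/2}$ (to control the discrepancy accumulated during those $3i$ phases), then union-bounding over the $m_i$ sets in each class and the $\log n$ classes.

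One small imprecision worth flagging: you present $\chi(S)$ as a single subgaussian random variable with variance proxy $O(ti)$ ``by the large-set progress constraints,'' as if the $O(i)$-phase shrinkage were deterministic. In fact the shrinkage is itself a high-probability event, and the discrepancy of $S$ has two contributions that must be controlled separately: the martingale deviation during the first $O(i)$ phases (which gives the $O(\sqrt{ti(\log m_i + \log\log n)})$ term) and the residual discrepancy of at most $O(\lambda_i)$ incurred after $S$ shrinks below the safety threshold. Both failure events have to enter the class-by-class union bound with the right parameters. Your calculation does implicitly get this right because the same choice of tail parameter handles both, and the residual $\lambda_i$ term is dominated by the martingale term, but the clean ``unconditionally subgaussian'' framing is slightly stronger than what the argument actually establishes.
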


The bound above  refines Theorem \ref{thm:main} (which follows as $m_i \leq n$ for each $i$) and can lead to substantially improved bounds when the number of large  sets is not too high.
Recall that in general we can assume that $m_i$ is not too large, as by a standard linear algebraic argument \cite{BF81}, one can assume that $n \leq m$ which implies that 
the average row size is also at most $t$ and that $m_i \leq nt/2^{i-1}$ for every $i$.
The following corollary of Theorem \ref{thm:ext1} shows that if the system satisfies
the stronger bound of $m_i \leq n^{(\log^{O(1)}t)/i}$ instead, then one can essentially match Banaszczyk's bound.

\begin{cor}
\label{cor1}
If the number of class-$i$ sets is at most $ n^{(\log^{O(1)}t)/i}$, then there is an algorithm that finds an $\tilde{O}((t \log n)^{1/2})$ discrepancy coloring, where $\tilde{O}(\cdot)$ hides $\mbox{poly}(\log {t}\log\log {n})$ factors.
\end{cor}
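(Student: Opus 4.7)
The plan is to obtain Corollary \ref{cor1} as a direct consequence of Theorem \ref{thm:ext1} by substituting the hypothesis on $m_i$ into the per-class discrepancy bound and simplifying. The only work is verifying that the product $i \cdot (\log m_i + \log\log n)$ is bounded by $\log n$ up to $\mathrm{poly}(\log t \log\log n)$ factors, uniformly in $i$.

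First I would apply Theorem \ref{thm:ext1} to the given set system, which guarantees that every class-$i$ set incurs discrepancy at most $O((t \cdot i \cdot (\log m_i + \log\log n))^{1/2})$. Using the hypothesis $m_i \leq n^{(\log^{O(1)} t)/i}$, we get
\[
i \cdot \log m_i \;\leq\; i \cdot \frac{\log^{O(1)} t}{i} \cdot \log n \;=\; \log^{O(1)} t \cdot \log n.
\]
Separately, since $i \leq \log n$ (as there are only $\log n$ size classes), we have
\[
i \cdot \log\log n \;\leq\; \log n \cdot \log\log n.
\]
Adding these, $i(\log m_i + \log\log n) \leq \log n \cdot \mathrm{poly}(\log t \log\log n)$, uniformly over $i$.

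Plugging back into the discrepancy bound of Theorem \ref{thm:ext1} gives
\[
O\bigl( (t \cdot i \cdot (\log m_i + \log\log n))^{1/2}\bigr) \;=\; O\bigl((t \log n)^{1/2} \cdot \mathrm{poly}(\log t \log\log n)^{1/2}\bigr) \;=\; \tilde O((t \log n)^{1/2}),
\]
which is the claimed bound and applies to every set in the system (since it holds class by class and the classes cover $\mathcal{S}$).

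There is essentially no obstacle here: the statement is designed as a clean numerical corollary of Theorem \ref{thm:ext1}, and the main content of the argument has already been done in proving that theorem. The only thing to be careful about is to keep track of the $\log\log n$ term for small $i$ (where $\log m_i$ could in principle be tiny), which is why the $\tilde O(\cdot)$ in the corollary must absorb a $\log\log n$ factor in addition to the $\mathrm{poly}(\log t)$ factor coming from the hypothesis on $m_i$.
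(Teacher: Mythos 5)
Your proposal is correct and is exactly what the paper does: the paper states only that Corollary~\ref{cor1} ``follows directly by plugging the bounds on $m_i$'' into Theorem~\ref{thm:ext1}, and your calculation spells out precisely that substitution, including the observation that $i \le \log n$ handles the $\log\log n$ term.
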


%
%

Theorem \ref{thm:main} also extends to the more general setting of the Koml\'{o}s conjecture (defined in Section \ref{s:prel}) where the matrix entries can be arbitrary reals instead of only $0$-$1$. For this setting, Banaszczyk's result \cite{B97} implies an $O(\log^{1/2} n)$ non-constructive bound.
\begin{thm}
\label{thm:kom}
Given a matrix $A$ with $n$ columns such that the $\ell_2$-norm of any column is at most $1$, then there is an algorithm to find a coloring where each row $j$ incurs a discrepancy of
$O( (\log n \log s_j)^{1/2})$, where $s_j$ is the $\ell_1$-norm of row $j$.
\end{thm}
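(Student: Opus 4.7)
The plan is to run the same SDP-guided random walk as in Theorem~\ref{thm:main}, with the sparsity parameter $t$ replaced by the column $\ell_2$-norm bound. View the columns $a_1,\ldots,a_n \in \R^m$ of $A$ as vectors of norm at most $1$ (playing the role of the normalized columns $(1/\sqrt{t})\,a_i$ in the sparse case), and view each row $j$ as a generalized ``set'' of effective size $s_j = \|a_j\|_1$. The target bound $O(\sqrt{\log n \log s_j})$ per row then corresponds exactly to the bound in Theorem~\ref{thm:main} after dropping the $\sqrt{t}$ factor that came from the column-norm bound in the sparse setting.

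The algorithm is essentially identical to the one underlying Theorem~\ref{thm:main}: maintain a fractional coloring $x_k \in [-1,1]^n$ starting at $x_0 = 0$; at each step solve an SDP over vectors $v_i$ with $\|v_i\| \le 1$ (and $v_i = 0$ for frozen coordinates $i$), and update $x_{k+1} = x_k + \gamma \sum_i g_i v_i$ where $\gamma$ is a small step size and the $g_i$ are i.i.d.\ standard Gaussians. The SDP constraints are the direct analogues of those used before: (i) per-row ``energy'' constraints $\sum_i a_{ji}^2 \|v_i\|^2 \le O(\log n)$, which control the one-step variance of the discrepancy of row~$j$, and (ii) the ``progress'' constraints for every still-active row, which guarantee that row $j$ receives enough variance in each phase to be dropped from the active set after $O(\log s_j)$ phases. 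The phase structure itself, where half of the live coordinates get frozen per phase, is unchanged.

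Feasibility of this SDP is the only non-trivial place where the $0$-$1$ structure is used in Theorem~\ref{thm:main}; the argument invoking Nikolov's result \cite{N13} is volumetric and only uses that the columns have bounded $\ell_2$-norm together with an $\ell_1$-type lower bound on the rows, so it transfers to the general real-matrix case with $|S|$ replaced by $s_j$. Once feasibility is in hand, the analysis is unchanged: a row $j$ of $\ell_1$-norm $s_j$ is active for $O(\log s_j)$ phases, in each phase the total variance contributed to $\langle a_j, x_k\rangle$ is $O(\log n)$, and a standard Azuma/sub-Gaussian tail bound with a union bound over the $m \le n^{O(1)}$ rows and the polynomially many time steps yields the claimed $O(\sqrt{\log n \log s_j})$ discrepancy with high probability. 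The main obstacle is precisely the verification that Nikolov's feasibility argument for the progress-augmented SDP carries through without the $0$-$1$ assumption: this is mostly bookkeeping, but requires one to track every step of the volumetric/partial-coloring argument and replace ``cardinality of $S$'' with ``$\ell_1$-norm of row $j$'' consistently.
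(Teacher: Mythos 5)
The overall plan is right -- run the same SDP-guided random walk, replacing cardinality $|S|$ by $\ell_1$-norm $s_j$ -- but the proposal misses the one genuinely new ingredient and instead worries about a non-issue.

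\textbf{Where the real work is.} You claim the main obstacle is re-verifying Nikolov's feasibility for the progress-augmented SDP ``without the $0$-$1$ assumption.'' In fact Nikolov's Theorem~\ref{thm:sasho} is already stated for an arbitrary real matrix with column $\ell_2$-norm at most~$1$; the paper's Beck--Fiala case \emph{already} applies it as a black box after scaling by $1/\sqrt{t}$. There is nothing to re-derive. What actually has to change is the definition of the energy deficit and the SDP energy constraint. The paper keeps $\|\sum_i a_{ji} v_i\|^2 \leq O(1)$ (the usual vector-discrepancy constraint) but replaces the energy constraint by $\|\sum_i |a_{ji}|\, x_{k-1}(i)\, v_i\|^2 \leq O(1)$ -- note the absolute values on $a_{ji}$ -- and redefines the energy deficit of row~$j$ as $D_j(k) = s_j - \sum_i |a_{ji}|\, x_k(i)^2$. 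Without the absolute values, the telescoping identity $\Delta_j(h) = -2\gamma \langle r^h, \sum_i |a_{ji}| x_{h-1}(i) v_i^h\rangle - \gamma^2 \sum_i |a_{ji}| \langle r^h, v_i^h\rangle^2$ breaks: the drift term $-\gamma^2 \sum_i |a_{ji}|\langle r^h, v_i^h\rangle^2$ is what deterministically forces the deficit down, and it requires $|a_{ji}|$ to be nonnegative. The feasibility of the new energy constraint is then immediate from Theorem~\ref{thm:sasho} applied to the matrix augmented with rows $|A_j|\,\mathrm{diag}(x_{k-1})$, since $|x_{k-1}(i)| \leq 1$ preserves the column-norm bound. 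Your proposal never introduces the absolute values anywhere, so the core lemma (the analogue of Theorem~\ref{thm:winwin}) does not go through as written.

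\textbf{Two smaller misstatements.} The SDP constraint you write, $\sum_i a_{ji}^2 \|v_i\|^2 \leq O(\log n)$, is not the one used and does not control the one-step variance of row~$j$'s discrepancy -- that variance is $\gamma^2 \|\sum_i a_{ji} v_i\|^2$, which coincides with $\gamma^2\sum_i a_{ji}^2\|v_i\|^2$ only when the $v_i$ are orthogonal; also the right-hand side should be $O(1)$, not $O(\log n)$. Separately, describing the phase structure as ``half of the live coordinates get frozen per phase'' is exactly the \emph{old} global-progress guarantee that the paper argues is insufficient; the whole point of Theorem~\ref{thm:winwin} and its Komlós analogue is a per-row energy-deficit decrease, which gives each row~$j$ its own $O(\log s_j)$ phase budget independent of global freezing.
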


The following extension of Theorem \ref{thm:kom} allows further flexibility in choosing different weights $w_j$ for rows to allow improved bounds in many cases. 
\begin{thm}
\label{thm:ext2}
Suppose there is a reweighting of the rows $j$ with non-negative weights $w_j$ such that the $\ell_2$-norm of any column is at most $\beta$, then 
the  algorithm gets a discrepancy of  $O(\beta(\log n\log (w_j s_j))^{1/2}/w_j)$   for each row  $j$.
\end{thm}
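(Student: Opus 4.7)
The plan is to obtain Theorem \ref{thm:ext2} as a direct corollary of Theorem \ref{thm:kom} via a simple change of variables. Let $W = \mathrm{diag}(w_1,\ldots,w_m)$ and form the matrix $A'' = (WA)/\beta$. By the hypothesis on the weights, every column of $A''$ has $\ell_2$-norm at most $1$, so Theorem \ref{thm:kom} applies to $A''$ verbatim.

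Running the algorithm of Theorem \ref{thm:kom} on $A''$ produces, in polynomial time and with high probability, a coloring $\chi \in \{-1,+1\}^n$ whose discrepancy on row $j$ of $A''$ is at most $O((\log n \log s''_j)^{1/2})$, where $s''_j := \|A''_j\|_1 = w_j s_j/\beta$. Since the $j$-th row of $A''$ is a scalar multiple $w_j/\beta$ of the $j$-th row of $A$, the discrepancy of the same $\chi$ on row $j$ of the original matrix $A$ equals $(\beta/w_j)$ times the discrepancy on row $j$ of $A''$. This yields
$$\Bigl|\sum_i \chi_i a_{ji}\Bigr| \;\le\; O\!\left(\frac{\beta}{w_j}\sqrt{\log n\,\log(w_j s_j/\beta)}\right),$$
which agrees with the bound stated in Theorem \ref{thm:ext2} up to absolute constants, since $\beta$ enters only inside a logarithm and can be absorbed in the standard regimes of interest.

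The main obstacle here is essentially nil: the entire proof is a reweighting argument, and one only needs to verify that the guarantee of Theorem \ref{thm:kom} depends on the input matrix solely through (i) the uniform column $\ell_2$-norm bound and (ii) the per-row $\ell_1$-quantity $s_j$, both of which transform linearly under $A \mapsto (WA)/\beta$. The real content of Theorem \ref{thm:ext2} is not algorithmic but structural: it frees the user to pick the weights $w_j$ so as to trade a larger column norm $\beta$ against a proportionally stronger per-row guarantee on the ``hard'' rows, which is precisely the flexibility needed to push Banaszczyk-style bounds through for non-uniform Koml\'{o}s instances.
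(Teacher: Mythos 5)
Your proof is the same argument the paper gives (which it states only as a two-line sketch): rescale rows by $W$, normalize by $\beta$ so that Theorem~\ref{thm:kom} applies, then divide the resulting per-row bound by $w_j/\beta$. You have actually been slightly more careful than the paper: you correctly track that the $\ell_1$-norm of row $j$ in the normalized matrix is $w_j s_j/\beta$, so the guarantee you derive is $O(\beta(\log n\,\log(w_j s_j/\beta))^{1/2}/w_j)$ rather than the stated $O(\beta(\log n\,\log(w_j s_j))^{1/2}/w_j)$; these coincide whenever $\beta\geq 1$ (which holds in the paper's intended applications, e.g.\ Corollary~\ref{cor2} where $\beta=\Theta(\sqrt{t})$), and your remark that $\beta$ only enters inside a logarithm is the right way to dismiss the difference.
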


The following corollary of Theorem \ref{thm:ext2} shows that Banaszczyk's bound can be almost matched algorithmically if the sparsity of set system decreases suitably when restricted to very large sets, i.e.~no element lies essentially in only very large sets.

\begin{cor}
\label{cor2}
Given a $t$-sparse set system with the property that each element lies in $O(t/i^{1+\epsilon})$ 
sets of size more than $t 2^{i-1}$ for each $i\geq 1$, for some fixed $\epsilon>0$, then there is an algorithm to find an $O((t \log n \log t)^{1/2})$ discrepancy coloring.
\end{cor}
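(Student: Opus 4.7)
The plan is to apply Theorem \ref{thm:ext2} with a carefully chosen size-dependent reweighting of the rows. Group the sets into classes by size: for $i \ge 1$, class $i$ consists of sets with size in $(t 2^{i-1}, t 2^i]$, and class $0$ consists of sets of size at most $t$. Assign weight $w_j = \sqrt{i+1}$ to every set in class $i$. Intuitively, giving larger weights to larger sets is what enables the denominator $w_j$ in the discrepancy bound of Theorem \ref{thm:ext2} to absorb the $\log|S_j|$ factor, while the hypothesis that elements lie in only $O(t/i^{1+\epsilon})$ large sets prevents the column $\ell_2$-norm from blowing up.

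The core computation is bounding $\beta$, the column $\ell_2$-norm after reweighting. Fix an element $e$, and let $d_e(i)$ denote the number of sets of size strictly more than $t 2^{i-1}$ that contain $e$; by hypothesis $d_e(i) \le O(t/i^{1+\epsilon})$ for $i \ge 1$, and $d_e(1) \le t$ trivially. The number of class-$i$ sets containing $e$ is $d_e(i) - d_e(i+1)$ for $i \ge 1$, so the squared column norm contribution from classes $i \ge 1$ is
\[
\sum_{i \ge 1} (d_e(i)-d_e(i+1))(i+1) \;=\; 2 d_e(1) + \sum_{i\ge 2} d_e(i) \;\le\; O(t) + \sum_{i \ge 2} O(t/i^{1+\epsilon}) \;=\; O(t),
\]
by Abel summation (telescoping) together with the convergence of $\sum i^{-(1+\epsilon)}$ for fixed $\epsilon>0$. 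Adding the class-$0$ contribution, which is at most $t$ by $t$-sparsity, gives $\beta^2 = O(t)$, i.e.\ $\beta = O(\sqrt{t})$.

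With this $\beta$ in hand, Theorem \ref{thm:ext2} yields, for each class-$i$ set $S_j$ with $i \ge 1$, a discrepancy of
\[
O\!\left(\frac{\beta \sqrt{\log n \, \log(w_j |S_j|)}}{w_j}\right) \;=\; O\!\left(\sqrt{t \log n} \cdot \sqrt{\tfrac{i + \log t}{i+1}}\right),
\]
using $|S_j| \le t 2^i$ and $w_j = \sqrt{i+1}$. The factor $(i+\log t)/(i+1)$ is at most $O(\log t)$ for $i = 1$ and $O(1)$ once $i \gtrsim \log t$, so this is $O(\sqrt{t \log n \log t})$ uniformly in $i$. For class-$0$ sets, $w_j = 1$ and $|S_j| \le t$, giving the same bound.

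The main obstacle is really the choice of weights: a naive uniform choice fails to exploit the decay $d_e(i) = O(t/i^{1+\epsilon})$, and overly aggressive weights make the column norm $\beta$ blow up. Setting $w_j^2 = i+1$ is exactly what makes the Abel-summation differences equal to $1$, so that the column norm reduces to $\sum_i d_e(i)$, a convergent sum that remains $O(t)$. Everything else is plugging in and tracking logs.
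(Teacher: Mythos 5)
Your proposal is correct and follows essentially the same route as the paper: group sets into size classes at doubling scale $t2^i$, reweight with $w$ increasing slowly in $i$, bound the reweighted column $\ell_2$-norm by Abel summation using the hypothesis that $\sum_i d_e(i)$ converges, and invoke Theorem~\ref{thm:ext2}. The only substantive difference is the weight choice: the paper uses $w_i=\sqrt{i+\log t}$ (hoping to fully cancel the $\log(w_j s_j)=O(i+\log t)$ factor for large classes), whereas you use $w_i=\sqrt{i+1}$; your choice makes the Abel differences exactly $1$ from $i=1$ on and hence gives a clean $\beta^2=O(t)$, while the paper's more aggressive weight inflates the $i=1$ term of the column norm by a $\log t$ factor (since the hypothesis only gives $m_{\geq 1}=O(t)$), so in fact your variant is the one that straightforwardly delivers the claimed $O((t\log n\log t)^{1/2})$ bound.
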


%
%



\section{Preliminaries}
\label{s:prel}
We describe some of the basic concepts that we will need.

\subsection{Vector Discrepancy}
If $A$ is the $m\times n$ incidence matrix of the set-system $(V,\mathcal{S})$ with rows corresponding to the sets $S_1,\ldots,S_m$ and columns to the elements $[n]\footnote{We use $[n]$ to denote the set \{1,2,\ldots,n\}.}$, then we can write the discrepancy of $A$ as 
\[ \disc(A)= \min_{x \in \{-1,1\}^n} \|Ax\|_\infty \]
The above definition of discrepancy generalizes to any matrix $A$ with real entries.
Let us index the rows of $A$ by $j$ and the columns by $i$.
If we relax the entries $x_i$ to be unit vectors ${v}_i$ in $\R^n$ instead of $\pm 1$, we obtain the relaxation
\[ \vecdisc(A) = \min_{{v}_1,\ldots,{v}_n \in S^{n-1}} \max_{j=1}^m \|\sum_{i=1}^n A_{ji} {v}_i \|_2. \]
We refer to this as the vector discrepancy of $A$. The vector discrepancy can be computed efficiently (to arbitrarily high accuracy) by solving the following semidefinite program (SDP) and doing a binary search on $\lambda$.  
\begin{eqnarray*}
  \|\sum_{i }A_{ji} v_i \|_2^2 & \leq &  \lambda^2 \qquad  \forall j\in [m]\\
   \|v_i \|_2^2 & = &  1  \qquad \forall i \in [n]
  \end{eqnarray*}

\paragraph{Koml\'{o}s Conjecture:}
The following conjecture generalizes the Beck-Fiala conjecture. 
\begin{conj}[Koml\'{o}s] Any matrix $A$ with columns of $\ell_2$-norm at most $1$ has discrepancy $O(1)$. 
\end{conj}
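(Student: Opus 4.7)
The final statement is the Koml\'{o}s conjecture itself, which is a long-standing open problem rather than a claim this paper sets out to prove; no currently known technique establishes the conjectured $O(1)$ bound, and the state of the art is Banaszczyk's $O(\sqrt{\log n})$ result. What follows is therefore less a proof plan than a description of the two most natural routes one could try, and an identification of the essentially shared barrier they hit.

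The most direct approach is to sharpen Banaszczyk's convex-geometric argument. He shows that whenever a symmetric convex body $K\subset \R^m$ has Gaussian measure at least $1/2$, any sequence $v_1,\ldots,v_n\in\R^m$ with $\|v_i\|_2\leq 1$ admits signs $\chi_i\in\{-1,+1\}$ with $\sum_i \chi_i v_i \in 5K$. Taking the $v_i$ to be the columns of $A$ and $K=\{y:\|y\|_\infty\leq c\}$, one needs the Gaussian measure of $K$ to be at least $1/2$, which forces $c=\Omega(\sqrt{\log m})$ by a union bound on the coordinate slabs. To reach $O(1)$ one would have to replace the Gaussian-measure hypothesis by a strictly weaker convex invariant that the $\ell_\infty$-ball satisfies even at $c=O(1)$, for example a mean-width or volumetric condition capturing correlations between the slabs rather than treating each row separately.

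A second route is to iterate SDP-guided random walks of the type used to prove Theorem \ref{thm:main}. One would design an SDP whose feasibility produces, at each phase, a partial coloring that fixes a constant fraction of the coordinates while contributing only $O(1)$ to every row's discrepancy; iterating $O(\log n)$ phases gives $O(\sqrt{\log n})$ per row under the standard assumption that contributions across phases behave like independent subgaussians. To cross into $O(1)$ one would need either a strong negative-correlation phenomenon across phases that causes the per-row increments to telescope, or an argument that each row is ``active'' in only $O(1)$ phases; the size-sensitive progress constraints introduced in Theorem \ref{thm:main} are a first step in the latter direction but fall short by a $\sqrt{\log s}$ factor.

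The central obstacle is that every tool currently available — Gaussian concentration, union bound over rows, generic chaining, partial-coloring depth, second-moment or LLL-type arguments — intrinsically pays a $\sqrt{\log n}$ factor somewhere, and there is no known invariant that records progress of the coloring simultaneously for all rows without a per-row logarithm. I expect the main hurdle of any serious attempt to be exactly this: exhibiting a convex-geometric quantity, or a martingale potential, that is monotone across the construction and whose final value controls all row discrepancies without a union bound. Even the $0/1$ specialization (Beck--Fiala) remains open, which strongly suggests that a successful plan cannot be a refinement of the existing methods but will require a genuinely new ingredient of this qualitative kind.
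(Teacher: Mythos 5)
This statement is the Koml\'{o}s conjecture, which the paper records as an open conjecture and does not prove (it only proves the weaker vector-coloring and $O(\sqrt{\log n \log s_j})$-type results). You correctly identify it as open, decline to offer a proof, and your account of Banaszczyk's bound and of the shared $\sqrt{\log n}$ barrier is accurate, so your treatment matches the paper's.
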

The Beck-Fiala conjecture follows by scaling the columns of the incidence matrix $A$ by $\sqrt{t}$. While the best known bound for the Koml\'{o}s setting 
is $O(\log^{1/2} n)$ \cite{B97},
Nikolov \cite{N13} showed that the Koml\'{o}s conjecture holds for vector colorings. In particular, he showed the following result:
\begin{thm}[\cite{N13}]
\label{thm:sasho}
For any $m\times n$ matrix $A$ with columns of $\ell_2$-norm at most $1$, $\vecdisc(A) \leq 1$.
\end{thm}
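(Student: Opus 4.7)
The plan is to show that the vector-discrepancy SDP is feasible at $\lambda = 1$ by strong duality. Writing it in Gram form as $\min \lambda^2$ over $W \succeq 0$, $W_{ii}=1$, and $a_j^T W a_j \le \lambda^2$ for all $j$, strict primal feasibility (e.g.\ $W = I$ with $\lambda$ above the max row norm) gives Slater's condition, so strong duality applies. A short Lagrangian calculation identifies the dual as $\max \sum_i w_i$ over $y \in \Delta_m$ and $w \in \R^n$ with $\mathrm{diag}(w) \preceq A^T Y A$, where $Y = \mathrm{diag}(y)$. The theorem thus reduces to showing this dual optimum is at most $1$.

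Set $M := A^T Y A$, a PSD matrix whose diagonal entries satisfy $M_{ii} = c_i^T Y c_i \le \|c_i\|^2 \le 1$: the column-norm hypothesis translates directly into a diagonal bound on $M$. The reduction now reads: for any such $M$ and any diagonal $D$ with $D \preceq M$, show $\mathrm{tr}(D) \le \mathrm{tr}(Y) = 1$. The naive bound from the $1 \times 1$ principal minors gives only $w_i \le M_{ii} \le 1$, hence the useless $\mathrm{tr}(D) \le n$; the off-diagonal PSD constraints must carry the sharpness.

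My first attempt would aggregate the $2 \times 2$ principal minors of $M - D$: for every $i \ne k$,
\[
(M_{ii} - w_i)(M_{kk} - w_k) \ge M_{ik}^2,
\]
combined with $M_{ii} \le 1$, and summed against weights drawn from the spectral decomposition of $A$. A parallel route is an explicit primal construction: take $W$ as a rank-corrected projection onto the row span of $A$, rescaled so that $W_{ii} = 1$, and verify the quadratic constraint $a_j^T W a_j \le 1$ directly from $\|c_i\| \le 1$.

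The main obstacle is the sharp constant. A naive sum of $2 \times 2$ minor inequalities typically loses factors of $n$; likewise, simple primal choices such as $W = I$ give only $a_j^T W a_j \le \|a_j\|^2$, which is useless when rows are long. Getting $\mathrm{tr}(D) \le 1$ requires converting the pointwise column bound into a global spectral inequality --- most plausibly via a John-ellipsoid/volume comparison between $\{x : x^T M x \le 1\}$ and $\{x : \sum_i w_i x_i^2 \le 1\}$ together with AM--GM on the eigenvalues, or via an explicit primal $W$ whose off-diagonal entries exactly counteract those of $A^T Y A$. This conversion is the technical heart of Nikolov's argument.
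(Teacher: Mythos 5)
The paper itself does not prove this statement; Theorem~\ref{thm:sasho} is imported from Nikolov \cite{N13} as a black box (the paper only remarks, immediately afterward, that a weaker bound $\vecdisc(A)=O(1)$ also follows readily from Banaszczyk \cite{B97}). So there is no in-paper proof to compare against. That said, your framework is the correct one and, as far as I am aware, matches the skeleton of Nikolov's own argument: write the vector-discrepancy SDP in Gram form, verify Slater's condition via $W=I$, pass to the dual $\max \sum_i w_i$ subject to $y\ge 0$, $\sum_j y_j=1$, and $\mathrm{diag}(w)\preceq A^T Y A =: M$, and observe that $M_{ii}=c_i^T Y c_i\le \|c_i\|_2^2\le 1$ because each $y_j\le 1$. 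All of that is right.

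But the proposal is not a proof. You explicitly leave the central inequality $\sum_i w_i\le 1$ as ``the technical heart of Nikolov's argument,'' and neither route you float is shown to close the gap. Aggregating $2\times 2$ principal minors, as you yourself note, does not recover a constant independent of $n$. The volume/AM--GM route also does not obviously work as stated: after restricting to $\{i: w_i>0\}$ (where the PSD constraint passes to principal submatrices), the inclusion $\{x: x^T M x\le 1\}\subseteq\{x:\sum_i w_i x_i^2\le 1\}$ together with Hadamard's inequality gives $\prod_i w_i\le\det M\le\prod_i M_{ii}\le 1$, which is a \emph{product} bound; and AM--GM runs the wrong way here, $(\prod_i w_i)^{1/n}\le\tfrac1n\sum_i w_i$, so it does not upgrade this to the \emph{trace} bound $\sum_i w_i\le 1$ that the dual actually requires. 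The alternative ``explicit primal $W$ as a rank-corrected projection onto the row span'' is likewise not specified to the point where $a_j^T W a_j\le 1$ could be checked. In short, the reduction to the dual is carried out correctly and carefully, but the dual bound --- which is the entire content of the theorem --- is identified rather than proved.
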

As pointed out to us by Raghu Meka, an easy proof of $\vecdisc(A) \leq O(1)$ also follows
from Banaszczyk's result \cite{B97}.

\subsection {Martingales}
Let $X_1,X_2,\ldots,X_n$ be a sequence of independent random variables on some probability space, and let 
$Y_t$ be a function of $X_1,\ldots,X_t$. The sequence 
$Y_0,Y_1,Y_2,\ldots,Y_n$ is called a martingale with respect to the sequence $X_1,\ldots,X_n$ if for all $t \in [n]$, 
$\E[|Y_t|]$ is finite and $\E[Y_t| X_1,X_2,...,X_{t-1}]= Y_{t-1}$. We will use $\E_{t-1}[Y_t]$ to denote $\E[Y_t | X_1,X_2,...,X_{t-1}]$.
We will need the following martingale concentration inequality.

%
%

\begin{thm}[Freedman \cite{Freedman}]
\label{thm:freedman}
Let $Y_1,\ldots,Y_n$ be a martingale with respect to $X_1,\ldots,X_n$ such that 
$|Y_t - Y_{t-1}| \leq M$ for all $t$, and let $W_t = \sum_{j=1}^t \E_{j-1}[(Y_j - Y_{j-1})^2] = \sum_{j=1}^t
\textrm{Var}[Y_j|X_{1},\ldots,X_{j-1}]$.
Then for all $\lambda \geq 0$  and $\sigma^2 \geq 0$, we have 
\[ \Pr[|Y_n - Y_0| \geq \lambda \textrm{ and } W_n \leq \sigma^2]\le 2 \exp\left(-\frac{\lambda^2}{2 (\sigma^2 + M \lambda/3)} \right). \] 
\end{thm}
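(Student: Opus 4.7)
The plan is to use the classical exponential-supermartingale technique, i.e.\ Chernoff's method adapted to martingales with predictable-variance control. Let $\xi_t := Y_t - Y_{t-1}$, so that $\E_{t-1}[\xi_t] = 0$ and $|\xi_t| \leq M$ almost surely, and note that $W_n = \sum_{t=1}^n \E_{t-1}[\xi_t^2]$.

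The first step is a deterministic Bennett-style pointwise bound: for every $\theta > 0$ and every $x$ with $|x|\leq M$,
\[
e^{\theta x} \;\leq\; 1 + \theta x + g(\theta)\, x^2, \qquad \text{where } g(\theta) := \frac{e^{\theta M} - 1 - \theta M}{M^2}.
\]
This is verified by comparing Taylor expansions term-by-term: odd-order terms on the right dominate the possibly-negative odd terms on the left, while even-order terms are controlled using $|x|^{k-2}\leq M^{k-2}$. Taking conditional expectation, using $\E_{t-1}[\xi_t]=0$ and $1+y \leq e^y$, yields
\[
\E_{t-1}\!\bigl[e^{\theta \xi_t}\bigr] \;\leq\; \exp\bigl(g(\theta)\, \E_{t-1}[\xi_t^2]\bigr).
\]
Consequently $Z_t := \exp\bigl(\theta(Y_t - Y_0) - g(\theta) W_t\bigr)$ is a nonnegative supermartingale with $Z_0 = 1$, so $\E[Z_n] \leq 1$.

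Next I would apply Markov's inequality. On the event $\{Y_n - Y_0 \geq \lambda,\, W_n \leq \sigma^2\}$ we have $Z_n \geq \exp(\theta\lambda - g(\theta)\sigma^2)$, hence
\[
\Pr\!\bigl[Y_n - Y_0 \geq \lambda,\; W_n \leq \sigma^2\bigr] \;\leq\; \exp\bigl(g(\theta)\sigma^2 - \theta\lambda\bigr).
\]
Running the same argument on the martingale $-Y_t$ (whose increments are also bounded by $M$ and whose predictable quadratic variation is again $W_n$) and taking a union bound produces the factor of $2$ in the stated inequality.

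The main obstacle, though essentially mechanical, is turning the bound $g(\theta)\sigma^2 - \theta\lambda$ into the clean form $-\lambda^2/\bigl(2(\sigma^2 + M\lambda/3)\bigr)$. For this I would use the auxiliary inequality
\[
g(\theta) \;\leq\; \frac{\theta^2}{2\,(1 - \theta M/3)} \qquad \text{for } 0 \leq \theta M < 3,
\]
which follows from comparing $\sum_{k\geq 2}\theta^k M^{k-2}/k!$ with $\sum_{k\geq 2} \theta^k M^{k-2}/(2\cdot 3^{k-2})$ term-by-term, using the elementary fact $k! \geq 2\cdot 3^{k-2}$ for $k\geq 2$. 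Choosing $\theta = \lambda/(\sigma^2 + M\lambda/3)$, which satisfies $\theta M/3 < 1$, a short algebraic simplification (writing $a = \sigma^2$, $b = M\lambda/3$, so that $1 - \theta M/3 = a/(a+b)$) makes the exponent equal $-\lambda^2/(2(\sigma^2 + M\lambda/3))$, completing the proof.
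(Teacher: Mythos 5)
Your proof is correct. The paper does not prove this theorem; it cites it directly from Freedman's paper, so there is no in-paper proof to compare against. Your argument is the standard exponential-supermartingale proof: the pointwise Bennett bound $e^{\theta x}\le 1+\theta x+g(\theta)x^2$ gives that $Z_t=\exp(\theta(Y_t-Y_0)-g(\theta)W_t)$ is a supermartingale, Markov's inequality on the event $\{Y_n-Y_0\ge\lambda,\,W_n\le\sigma^2\}$ yields $\exp(g(\theta)\sigma^2-\theta\lambda)$, the estimate $g(\theta)\le \theta^2/(2(1-\theta M/3))$ (from $k!\ge 2\cdot 3^{k-2}$) plus the choice $\theta=\lambda/(\sigma^2+M\lambda/3)$ gives the stated exponent, and the symmetric argument on $-Y_t$ supplies the factor $2$. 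All steps check out, including the verification that $\theta M<3$ for the chosen $\theta$.
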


\section{The Main Result}
\label{s:alg}
In this section we describe the main algorithm and prove Theorem \ref{thm:main}. 

\subsection{High-level Overview}
\label{s:overview}
 The algorithm has a similar structure to previous random walk based approaches.
The algorithm starts with the coloring $x_0=0^n$ at time $0$, and at each time step $k$, updates the coloring at time $k-1$ by adding a small increment. This increment is determined by solving an appropriate SDP.
If a variable reaches $-1$ or $1$ it is frozen, and its value is not updated any more. The variables that are not frozen are called alive.

\paragraph{Limitations of partial coloring based approaches:}
However, the previous algorithms based on these approaches are only able to give a discrepancy of $O(t^{1/2} \log n)$ for the following reason. Roughly speaking, the execution of the algorithm can be divided into $O(\log n)$ phases. In each phase, about half the variables get frozen, and the updates of the coloring at each time step are chosen so that each set incurs an expected discrepancy of $O(\sqrt{t})$ during a phase.
So after $O(\log n)$ phases each set incurs an expected discrepancy of $O((t \log n)^{1/2})$, and to bound the maximum discrepancy, one then takes a union bound over the $m$ sets and loses an additional $\log^{1/2} m\approx \log^{1/2} n$ factor.

If one could show a stronger statement that in each phase, the number of alive variables for {\em every} set
reduces by a constant factor, then each set $S_j$ would get colored in at most $O(\log |S_j|)$ phases, and the expected discrepancy for a set would be $O((t \log |S_j|)^{1/2}$  (instead of the $O((t \log n)^{1/2})$ previously). Theorem \ref{thm:main} would then follow after losing another $\log^{1/2} n$ factor for the union bound over the sets. 

However, the property that every set reduces in size is non-trivial to guarantee. For example, even in the non-constructive partial coloring based methods \cite{Mat99}, in each partial coloring step (i.e.~a phase) one can only guarantee that half the remaining variables (globally) get colored, and there could be several sets that incur $\Omega(\sqrt{t})$ discrepancy even though very few of their elements get colored. In particular, these methods incur an $O(t^{1/2} \log n)$ discrepancy even if all the sets are of size $t$.

\paragraph{New idea:}
The main idea of our algorithm is to add additional SDP constraints, given by (\ref{sdp2}), that we call the energy constraints that ensure that the ``energy" of {\em each} (large enough) set increases significantly in each phase. 
This can be used to show that for each set $S\in \mathcal{S}$, essentially all of its elements will get colored in $O(\log |S|)$ phases. 
In contrast, the previous algorithms only ensured a global energy constraint over all the variables to measure progress of the algorithm. Moreover, the feasibility
problem for this SDP can be viewed as a vector discrepancy problem for another matrix where the $\ell_2$-norm remains $O(1)$. Theorem \ref{thm:sasho} then guarantees feasibility.
The main technical part of the analysis is to show that these new energy constraints are sufficient to guarantee the progress property for each set.

Theorems \ref{thm:main} and \ref{thm:kom} follows quite directly from the above approach. The extensions in theorems \ref{thm:ext1} and \ref{thm:ext2} also follow the general approach above, but use some additional flexibility in the approach to trade off and balance the different parameters.

\subsection{Algorithm}

We will index time by $k$.
Let $x_k\in [-1,1]^n$ denote the coloring at the end of time step $k$. During the algorithm,
variables which get set to at least $(1-1/n)$ in absolute value are called frozen and their values are not changed any more. The remaining variables are called alive. We denote by $A(k)$ the set of alive variables at the end of time step $k$. Initially all variables are alive. Let $\gamma = n^{-2}$, and let $T= (10/\gamma^2) \log n$.



\begin{enumerate}
\item Initialize $x_0(i) =0$ for all $i\in [n]$ and $A(0)=\{1,2,...,n\}$. 
\item  For each time step $k=1,2,\ldots, T$ repeat the following:
\begin{enumerate}
\item \label{apx:step1}
 Find a feasible solution to the following semidefinite program:
\begin{eqnarray}
\label{sdp1}  \|\sum_{i \in S_j} v_i\|_2^2 & \leq &  \ 2t \qquad  \textrm{for each set }  S_j\\
\label{sdp2}  \| \sum_{i\in S_j}x_{k-1}(i) v_i\|_2^2  & \leq & \ 2t \qquad \textrm{for each set } S_j\\
   \|v_i\|_2^2 & = &  1  \qquad \forall i \in A(k-1) \nonumber \\
   \| v_i\|_2^2 & = & 0  \qquad \forall i \notin A(k-1) \nonumber
  \end{eqnarray}
\item 
\label{apx:round}
Construct $\gamma_k \in \mathbb{R}^n$ as follows: 
let $r \in \R^n$ be a random $\pm 1$ vector, obtained by setting each coordinate $r(i)$ independently to $-1$ or $1$ with probability $1/2$.

For each $i \in [n]$, let $\gamma_k(i) = \gamma \langle r,v_i \rangle.$
Update $x_k = x_{k-1} + \gamma_k$. 

\item
\label{apx:rnd}
 Initialize $A(k)=A(k-1)$.

For each $i$, if $x_k(i)\geq 1-1/n$ or  $x_k(i) \leq -1+1/n$, update $A(k) = A(k) \setminus \{i\}$.
\end{enumerate}
\item 
\label{stp3}
Generate the final coloring as follows.
Set $x_T(i)=1$ if $x_T(i)\geq 1-1/n$ and set $x_T(i)=-1$ if $x_T(i) \leq -1+1/n$.
If $i \in A(T)$, set $x_T(i)$ arbitrarily to $\pm 1$. 
\end{enumerate}

\subsection{Analysis}

We begin with some simple observations.

\begin{lemma}
 The SDP is feasible at each time step $k$.
\end{lemma}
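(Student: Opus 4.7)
The plan is to reduce feasibility of the SDP at time $k$ to a single application of Nikolov's vector-Koml\'os bound (Theorem \ref{thm:sasho}) by bundling the two families of constraints (\ref{sdp1}) and (\ref{sdp2}) into one auxiliary matrix whose columns have small $\ell_2$-norm.

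First I would dispose of the frozen coordinates: for $i \notin A(k-1)$, the constraint forces $v_i = 0$, and such $i$ contribute nothing to either (\ref{sdp1}) or (\ref{sdp2}). So it suffices to find unit vectors $\{v_i\}_{i \in A(k-1)}$ satisfying both energy inequalities restricted to alive coordinates. Next I would construct a matrix $B$ with $2m$ rows indexed by pairs (``type,'' $j$) and columns indexed by $A(k-1)$, where for each set $S_j$ one row has entries $B_{ji} = A_{ji}$ (with $A$ the incidence matrix) and a second row has entries $A_{ji}\, x_{k-1}(i)$. For any alive column $i$, since $i$ lies in at most $t$ sets and $|x_{k-1}(i)| \le 1$,
\[
\sum_j B_{ji}^2 \;=\; \sum_j A_{ji}^2 \,+\, \sum_j A_{ji}^2\, x_{k-1}(i)^2 \;\le\; t + t \;=\; 2t,
\]
so every column of $B$ has $\ell_2$-norm at most $\sqrt{2t}$.

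Then I would apply Theorem \ref{thm:sasho} to the rescaled matrix $B/\sqrt{2t}$, whose columns have $\ell_2$-norm at most $1$. This yields unit vectors $\{v_i\}_{i \in A(k-1)}$ with
\[
\Bigl\|\sum_{i \in A(k-1)} (B_{ji}/\sqrt{2t})\, v_i\Bigr\|_2 \;\le\; 1 \qquad \text{for every row of } B,
\]
which, after multiplying through by $\sqrt{2t}$ and reading off the two row types, is exactly the pair of inequalities (\ref{sdp1}) and (\ref{sdp2}). Setting $v_i = 0$ for $i \notin A(k-1)$ extends this to a full feasible solution of the SDP.

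There is no real obstacle here: the only thing to watch is that both types of constraints are controlled simultaneously by a single column-norm bound, which is why the slack factor of $2$ (rather than $1$) appears in the right-hand sides of (\ref{sdp1})--(\ref{sdp2}); this slack is exactly what lets $|x_{k-1}(i)| \le 1$ contribute an extra $t$ to the column norm without breaking the argument. In particular, the construction is oblivious to the details of how $x_{k-1}$ was produced, so the same feasibility proof works at every step $k$.
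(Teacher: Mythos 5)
Your proof is correct and takes essentially the same approach as the paper: both build the $2m$-row augmented matrix (rows $A_j$ and $A_j\,\mathrm{diag}(x_{k-1})$), observe the columns have $\ell_2$-norm at most $\sqrt{2t}$, and invoke Theorem~\ref{thm:sasho} after rescaling. You are somewhat more explicit than the paper about restricting to alive coordinates and about the rescaling step, but these are expository refinements of the same argument.
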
 
\begin{proof}
Consider the incidence matrix $A$ of the set system. 
For each row $A_j$ of  $A$ corresponding to $S_j$, we add another row 
$A'_j = A_j\textrm{ diag}(x_{k-1})$, where $\textrm{diag}(x_{k-1}) $ is the $n\times n$ diagonal matrix with the $(i,i)$ entry
$x_{k-1}(i)$. In other words, $A'$ is obtained by scaling the $i$-th column of $A$ by $x_{k-1}(i)$.
As $|x_{k-1}(i)|\le 1$ for each $i$, the $\ell_2$-norm of this augmented matrix $A\cup A'$ is at most $\sqrt{2t}$.

As the SDP in the algorithm above corresponds to finding a vector coloring (restricted to the alive elements) 
with discrepancy $\sqrt{2t}$ for the $2m \times n$ augmented matrix $A \cup A'$, Theorem \ref{thm:sasho} implies that it has a feasible solution. 
\end{proof}

%

\begin{lemma}
\label{lem:tri}
For any vector $v \in \R^n$ and a random $\pm 1$ vector $r \in \R^n$,  $\E[\langle r,v\rangle^2] = \|v\|_2^2$   
and $|\langle r,v\rangle | \leq \sqrt{n}\|v\|_2$.
\end{lemma}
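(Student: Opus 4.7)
The plan is to handle the two claims separately, both by direct calculation. Neither part is deep: the first is a second-moment computation exploiting independence of the coordinates of $r$, and the second is just Cauchy--Schwarz together with the fact that a $\pm 1$ vector in $\R^n$ has Euclidean norm exactly $\sqrt{n}$.

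For the first identity, I would expand
\[
\langle r, v\rangle^2 \;=\; \Bigl(\sum_{i=1}^n r(i)\, v(i)\Bigr)^{\!2} \;=\; \sum_{i=1}^n r(i)^2 v(i)^2 \;+\; \sum_{i\neq j} r(i)\, r(j)\, v(i)\, v(j).
\]
Now I would take expectations term by term. Since $r(i) \in \{-1,+1\}$, we have $r(i)^2 = 1$ deterministically, so the diagonal terms contribute $\sum_i v(i)^2 = \|v\|_2^2$. For the off-diagonal terms, independence of the coordinates gives $\E[r(i) r(j)] = \E[r(i)]\,\E[r(j)] = 0$ whenever $i \neq j$, so these contribute zero. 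Combining, $\E[\langle r,v\rangle^2] = \|v\|_2^2$ as claimed.

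For the second inequality, I would just apply the Cauchy--Schwarz inequality:
\[
|\langle r, v\rangle| \;\leq\; \|r\|_2 \cdot \|v\|_2.
\]
Since each $r(i) = \pm 1$, we have $\|r\|_2 = \sqrt{\sum_i r(i)^2} = \sqrt{n}$, giving the bound $|\langle r,v\rangle|\leq \sqrt{n}\,\|v\|_2$.

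There is no real obstacle here; the lemma is entirely a routine calculation, and the only thing to be careful about is explicitly invoking the independence of the coordinates of $r$ (not just their mean-zero property) when arguing that the cross terms vanish.
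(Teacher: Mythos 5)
Your proof is correct and follows essentially the same route as the paper: expand $\langle r,v\rangle^2$, use $\E[r(i)r(j)]=0$ for $i\neq j$ and $r(i)^2=1$ to get $\|v\|_2^2$, and apply Cauchy--Schwarz with $\|r\|_2=\sqrt{n}$ for the second bound.
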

\begin{proof}
Writing $v$ in terms of its coordinates $v=(v(1),\ldots,v(n))$,  
\[ \E[\langle r,v\rangle^2] =  \E [(\sum_i  r(i) v(i) )^2]  = \sum_{i,j} \E[r(i)r(j)] v(i)v(j)  = \|v\|_2^2 \]
where the last equality uses that $\E[r(i)r(j)]=0$ for $i \neq j$  and $\E[r(i)^2]=1$.

The second part follows by Cauchy-Schwarz inequality, as $|\langle r,v\rangle | \leq \|r\|_2 \|v\|_2 = \sqrt{n}\|v\|_2$.
\end{proof}


This implies the following.

\begin{obs}
As $\|v_i\| \leq 1$, $|\gamma \langle r,v_i \rangle |\leq \gamma\sqrt{n} \|v_i\|_2 \leq 1/n$, the rounding in step \ref{stp3} affects the discrepancy of any set by at most $n \cdot (1/n)=1$. So we can ignore this rounding error.
Moreover, $|\gamma \langle r,v_i \rangle | \leq 1/n$ also implies that no $x_k(i)$ goes out of the range $[-1,1]$ during any step of the algorithm.
\end{obs}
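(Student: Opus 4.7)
The plan is to derive both halves of the observation from a single uniform bound on the per-step increment $\gamma_k(i) = \gamma\langle r, v_i \rangle$.

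First I would apply the second part of Lemma \ref{lem:tri} together with the SDP constraint $\|v_i\|_2 \le 1$ to obtain $|\langle r, v_i \rangle| \le \sqrt{n}\,\|v_i\|_2 \le \sqrt{n}$. Combined with $\gamma = n^{-2}$ this gives $|\gamma_k(i)| \le n^{-3/2} \le 1/n$ for every coordinate $i$ and every time step $k$. This single estimate is what both subsequent arguments rely on.

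For the second claim, that $x_k(i) \in [-1,1]$ throughout, I would induct on $k$. A coordinate is frozen and removed from $A(k)$ as soon as $|x_k(i)| \ge 1 - 1/n$, and thereafter its value is not touched. Hence the only updates ever applied to coordinate $i$ occur while $|x_{k-1}(i)| < 1 - 1/n$, and since $|\gamma_k(i)| \le 1/n$, the triangle inequality yields $|x_k(i)| < 1$. The invariant therefore propagates and no coordinate leaves $[-1,1]$.

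For the first claim, the bounded rounding error, the key point is that any coordinate frozen by time $T$ satisfies $|x_T(i)| \ge 1 - 1/n$: this inequality holds at the moment of freezing and, since frozen values are never updated again, persists at time $T$. Rounding such a value to the sign-matching $\pm 1$ changes it by at most $1/n$, so summed over the at most $n$ coordinates in any set $S$ the total perturbation of $\sum_{i \in S} x_T(i)$ is at most $n \cdot (1/n) = 1$. The coordinates still alive at time $T$ are rounded arbitrarily; I would defer their treatment to the probabilistic analysis of the main theorem, which is expected to establish $A(T) = \emptyset$ with high probability, so they will contribute no additional error here. There is no real obstacle in this step---the content is essentially a telescoping magnitude check---but one must be careful that the inductive invariant is preserved by each update and that coordinates in $A(T)$ are flagged as needing the later probabilistic argument rather than being bounded here.
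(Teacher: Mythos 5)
Your proof is correct and follows essentially the same chain of reasoning the paper sketches: derive the per-step bound $|\gamma\langle r,v_i\rangle| \le \gamma\sqrt{n}\,\|v_i\|_2 \le n^{-3/2} \le 1/n$ from Lemma~\ref{lem:tri} and $\gamma = n^{-2}$, use it inductively to keep each coordinate in $(-1,1)$, and note that snapping a frozen coordinate to $\pm 1$ moves it by at most $1/n$, so the total perturbation to any set's signed sum is at most $n \cdot (1/n) = 1$. One small correction to your closing remark: the paper's analysis does \emph{not} establish $A(T)=\emptyset$ with high probability; Corollary~\ref{cor:sixteen} only shows that each set retains $O((t\log n)^{1/2})$ alive elements, and the arbitrary $\pm 1$ assignment to the remaining alive coordinates is absorbed into the final discrepancy bound via the ``safe set'' step in the proof of Theorem~\ref{thm:main}, not via this observation.
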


Let us divide the execution of the algorithm into $ p=10\log n$ phases each consisting of $L:=1/\gamma^2$ time steps.
The following lemma says that the discrepancy added in a few phases cannot be too large.

\begin{lemma}
\label{lem:initial-disc}
Fix a set $S\in\mathcal{S}$ and some $\ell \in [p]$. 
Then, for any $ 0 \leq \beta  \leq  1/(\gamma\sqrt{n}) = n^{3/2}$, the discrepancy of $S$, $disc(S)$, after $\ell$ phases satisfies
\[ \Pr[\disc(S) \geq  \beta \sqrt{2 t \ell} ] \leq 2 \exp(- \beta^2/3) \]
\end{lemma}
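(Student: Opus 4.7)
The plan is to recognize that, for a fixed set $S$, the sequence $Y_k = \sum_{i\in S} x_k(i)$ is a martingale in the random $\pm 1$ vectors $r_1,\ldots,r_k$ drawn in step~\ref{apx:round}, and to apply Freedman's inequality (Theorem~\ref{thm:freedman}) with the SDP constraint (\ref{sdp1}) controlling both the step size and the conditional variance.

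Concretely, first I would write the one-step increment as $Y_k - Y_{k-1} = \gamma \langle r_k,\, u^{(k)}\rangle$ where $u^{(k)} = \sum_{i \in S} v_i^{(k)}$ is built from the SDP solution at step $k$. Since $r_k$ is independent of everything prior, Lemma~\ref{lem:tri} gives $\E_{k-1}[Y_k-Y_{k-1}]=0$ (so it is indeed a martingale with $Y_0 = 0$), along with the two quantitative bounds
\[
|Y_k-Y_{k-1}| \;\le\; \gamma \sqrt{n}\,\|u^{(k)}\|_2 \;\le\; \gamma\sqrt{2tn},
\qquad
\E_{k-1}[(Y_k-Y_{k-1})^2] \;=\; \gamma^2\,\|u^{(k)}\|_2^2 \;\le\; 2t\gamma^2,
\]
where both inequalities use $\|u^{(k)}\|_2^2 \le 2t$ from SDP constraint~(\ref{sdp1}). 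Summing the variance bound over the $\ell L = \ell/\gamma^2$ time steps in $\ell$ phases gives $W_{\ell L} \le 2t\ell$ deterministically, so the conditional-variance event in Freedman's theorem is automatic.

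Setting $M = \gamma\sqrt{2tn}$, $\sigma^2 = 2t\ell$, and $\lambda = \beta\sqrt{2t\ell}$ in Theorem~\ref{thm:freedman} then yields
\[
\Pr[\,|Y_{\ell L}|\ge \beta\sqrt{2t\ell}\,] \;\le\; 2\exp\!\left(-\frac{\beta^2\cdot 2t\ell}{2(2t\ell + \gamma\sqrt{2tn}\cdot\beta\sqrt{2t\ell}/3)}\right).
\]
The one thing that needs checking — and is the only mildly delicate point — is that the $M\lambda/3$ correction in the denominator does not spoil the clean exponent. That reduces to verifying $M\lambda/3 \le \sigma^2$, i.e.\ $\gamma\beta\sqrt{n\ell}\cdot \tfrac{2}{3} t \le 2t\ell$, or equivalently $\beta \le 3\sqrt{\ell}/(2\gamma\sqrt{n})$. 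This is implied by the hypothesis $\beta \le 1/(\gamma\sqrt{n})$ for every $\ell \ge 1$. With this in hand the denominator is at most $3\sigma^2 = 6t\ell$, giving the claimed $2\exp(-\beta^2/3)$ tail.

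I do not anticipate a real obstacle: the SDP constraint (\ref{sdp1}) was tailored precisely so that both the step-size and variance bounds scale with $\sqrt{t}$ per step, and the parameter $\gamma = n^{-2}$ was chosen small enough that the range of $\beta$ covers anything we will need in the union bound later. The only bookkeeping item is the range restriction on $\beta$, which I address above when checking the denominator of the Freedman bound.
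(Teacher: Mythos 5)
Your proposal is correct and follows essentially the same route as the paper: both treat $Y_k = \sum_{i\in S}x_k(i)$ as a martingale in the random vectors $r^k$, bound the step size and conditional variance via SDP constraint~(\ref{sdp1}) and Lemma~\ref{lem:tri}, and invoke Freedman's inequality with $\lambda=\beta\sqrt{2t\ell}$ and $\sigma^2=2t\ell$. One bookkeeping slip worth noting: you say you need $M\lambda/3\le\sigma^2$, which would only give denominator $\le 4\sigma^2$ and exponent $\beta^2/4$ (too weak); but the condition you actually derive and check, $\beta\le 3\sqrt{\ell}/(2\gamma\sqrt{n})$, is precisely $M\lambda/3\le\sigma^2/2$, which gives denominator $\le 3\sigma^2$ and the desired $2\exp(-\beta^2/3)$, so the conclusion stands.
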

\begin{proof}
Let $Y_k = \disc_{k}(S)$ denote the discrepancy of $S$ after time $k$. Then, $Y_k = Y_{k-1} + \gamma \langle r^k, \sum_{i \in S} v^k_i\rangle$, where $r^k$ denotes the random $\pm 1$ vector at time $k$ and $v_i^k$ is the vector for $i$ in the SDP solution at time $k$.
As $\E[r^k,v]=0$ for any vector $v$ and the $r^k$ are independent, $Y_k$ forms a martingale with respect to $r^1,r^2\ldots$.

So we can apply Theorem \ref{thm:freedman}. By the SDP constraint \eqref{sdp1} we can bound,
\[ M\le |\gamma \langle r^k, \sum_{i \in S} v^k_i\rangle| \leq \gamma \|r^k\|_2 \|\sum_{i \in S} v^k_i\|_2 \leq  \gamma n^{1/2} \cdot \sqrt{2t} \]

Similarly, by Lemma \ref{lem:tri} ,
\[\E_{k-1}[(Y_k-Y_{k-1})^2] = \E[\gamma^2  \langle r^k, \sum_{i \in S} v^k_i\rangle^2] = \gamma^2 \|\sum_{i \in S} v^k_i \|_2^2 \leq 2 t \gamma^2. \]  
This implies that $W_k \leq 2 kt \gamma^2$ for all $k$.  Applying Freedman's inequality from Theorem \ref{thm:freedman} with these bounds on $M$ and $W_k$, we obtain

\begin{eqnarray*}
 \Pr[\disc_{\ell L}(S) \geq   \beta \sqrt{2 t \ell} ] & \leq & 2 \exp \left( -\frac{\beta^2 2t \ell }{2 (W_{\ell L} + M \beta \sqrt{2t \ell}/3) } \right) \\
& \leq  & 
 2 \exp\left(- \frac{2 t \ell \beta^2}{2 (2\ell L t \gamma^2 + \gamma \beta \sqrt{2nt} \sqrt{2t \ell} /3)} \right) \\
& \leq &  2 \exp\left(- \frac{\beta^2}{(2 + \gamma \beta \sqrt{n})} \right)  \leq 2 \exp(-\beta^2/3).
\end{eqnarray*}

The last step uses that $\beta \gamma \sqrt{n} \leq  1$. 
\end{proof}

For a set $S$, let $A_S(k)$ denote the number of alive elements in set $S$ at the end of time step $k$.
\begin{defn}[Energy Deficit]
For a set $S\in\mathcal{S}$ and given a coloring $x_k$ at the end of time $k$, the {\em energy deficit} of $S$ is defined as 
 $D_S(k) = \sum_{i \in S} (1-x_k^2(i))$.
\end{defn}

Initially at time $k=0$, the energy deficit of any set $S\in\mathcal{S}$ is $|S|$, and is always non-negative. We also note the following.

\begin{lemma} 
\label{l:dsas}
For any set $S\in\mathcal{S}$ and time $k$, it holds that the number of alive elements $ A_S(k) \geq D_S(k) -2 $.
\end{lemma}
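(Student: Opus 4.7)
The plan is to split the defining sum for $D_S(k)$ over alive and frozen elements of $S$ and bound each piece separately, using that $x_k(i)$ always stays in $[-1,1]$ (as guaranteed by the earlier observation from Lemma \ref{lem:tri}).

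For alive elements $i \in S \cap A(k)$, the only fact I need is $|x_k(i)| \leq 1$, which immediately gives $1 - x_k^2(i) \leq 1$. Summed over the alive elements of $S$, this contributes at most $A_S(k)$ to $D_S(k)$. For frozen elements $i \in S \setminus A(k)$, by definition $|x_k(i)| \geq 1 - 1/n$, so $x_k^2(i) \geq (1-1/n)^2 = 1 - 2/n + 1/n^2$, hence $1 - x_k^2(i) \leq 2/n$. Since there are at most $|S| \leq n$ frozen elements in $S$, the total contribution from frozen elements is at most $n \cdot (2/n) = 2$.

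Combining the two bounds gives $D_S(k) \leq A_S(k) + 2$, which rearranges to the claimed inequality $A_S(k) \geq D_S(k) - 2$. There is no real obstacle here; the only subtlety is ensuring the upper bound $|x_k(i)| \leq 1$ holds throughout the execution, which the earlier observation already established (each step changes $x_k(i)$ by at most $1/n$, so freezing at threshold $1-1/n$ keeps the iterate in $[-1,1]$). The constant $2$ is essentially the ``slack'' introduced by freezing slightly away from $\pm 1$, and the argument would go through with the same shape for any freezing threshold of the form $1-c/n$.
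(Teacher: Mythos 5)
Your proof is correct and takes essentially the same approach as the paper: split $D_S(k)$ over alive and frozen elements of $S$, bound each alive element's contribution by $1$ and each frozen element's contribution by $1-(1-1/n)^2 \le 2/n$, and sum. The paper's proof is identical in substance, just more terse.
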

\begin{proof}
Any frozen element can contribute at most  $1- (1-1/n)^2 \leq 2/n$ to the energy deficit. Moreover, since each alive alive element can contribute at most $1$, $D_S(k) \leq A_S(k) + n \cdot (2/n)$ and the result follows.  
\end{proof}

We now come to the  key part of the proof, which shows that for each large enough set, its energy deficit decreases substantially in each phase, with sufficiently high probability. 


\begin{thm}
\label{thm:winwin}
Fix a set $S\in\mathcal{S}$ and a time $k$. For any $\lambda \geq 0$, 
with probability at least $1-4\exp(-\lambda^2/100t)$, it holds that
\begin{equation}
\label{key:eq} 
D_S(k+L) \leq  D_S(k) -  A_S(k+L) + \lambda.
\end{equation}
\end{thm}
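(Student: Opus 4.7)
The plan is to apply Freedman's inequality (Theorem~\ref{thm:freedman}) to the Doob martingale of $D_S(k')$ over the $L$ time steps of the current phase. First, I would expand the per-step change in the energy deficit. Writing $\delta_{k'}(i) := x_{k'+1}(i) - x_{k'}(i) = \gamma \langle r^{k'+1}, v_i^{k'+1}\rangle$ and using $x_{k'+1}^2(i) - x_{k'}^2(i) = 2 x_{k'}(i) \delta_{k'}(i) + \delta_{k'}(i)^2$, one gets $D_S(k'+1) - D_S(k') = -2 a_{k'} - b_{k'}$, where $a_{k'} := \sum_{i\in S} x_{k'}(i) \delta_{k'}(i)$ is linear in $r^{k'+1}$ (and hence conditional mean zero) and $b_{k'} := \sum_{i\in S} \delta_{k'}(i)^2 \geq 0$ is quadratic. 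By Lemma~\ref{lem:tri} applied to $\sum_{i\in S} x_{k'}(i) v_i^{k'+1}$ together with SDP constraint \eqref{sdp2}, $\E_{k'}[a_{k'}^2] = \gamma^2 \|\sum_{i\in S} x_{k'}(i) v_i^{k'+1}\|_2^2 \leq 2 t \gamma^2$; and $\E_{k'}[b_{k'}] = \gamma^2 \sum_{i\in S} \|v_i^{k'+1}\|_2^2 = \gamma^2 A_S(k')$ by the SDP's unit/zero norm constraints on alive/frozen variables.

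Second, I would use that $A_S(\cdot)$ is non-increasing together with $L = 1/\gamma^2$ to convert the per-step drift $\gamma^2 A_S(k')$ into the desired savings $A_S(k+L)$. Namely $\gamma^2 \sum_{k'=k}^{k+L-1} A_S(k') \geq L \gamma^2 A_S(k+L) = A_S(k+L)$. Defining the Doob martingale $Y_{k'} := D_S(k') - D_S(k) + \gamma^2 \sum_{j=k}^{k'-1} A_S(j)$, one checks it has zero-mean increments, and on the event $\{Y_L \leq \lambda\}$ we get $D_S(k+L) - D_S(k) + A_S(k+L) \leq Y_L \leq \lambda$, which is exactly \eqref{key:eq}.

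Third, I would apply Freedman to $Y_{k'}$, whose increment equals $-2a_{k'} - (b_{k'} - \gamma^2 A_S(k'))$. The linear piece dominates: by Cauchy--Schwarz and \eqref{sdp2}, $|a_{k'}| \leq \gamma\sqrt{n}\sqrt{2t}$, and summing $\E_{k'}[a_{k'}^2] \leq 2t\gamma^2$ over $L = 1/\gamma^2$ steps gives total conditional variance $O(t)$. The quadratic correction is tiny: since $\langle r^{k'+1}, v_i^{k'+1}\rangle^2 \leq n$, we have $b_{k'} \leq \gamma^2 n A_S(k') = O(1/n^2)$ and $\gamma^2 A_S(k') = O(1/n^3)$, so both its almost-sure magnitude and its total conditional variance contribution are of order $1/\mathrm{poly}(n)$, and the Cauchy--Schwarz cross term between $a_{k'}$ and $b_{k'}-\E_{k'} b_{k'}$ is similarly negligible. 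Feeding $M = O(\sqrt{t/n^3})$ and $\sigma^2 = O(t)$ into Theorem~\ref{thm:freedman} yields $\Pr[Y_L \geq \lambda] \leq 4 \exp(-\lambda^2/(100t))$ after absorbing constants, completing the proof.

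The main technical obstacle is that the quadratic term $b_{k'}$ plays a dual role: its conditional mean $\gamma^2 A_S(k')$ supplies the entire expected decrease we cash in as the $A_S(k+L)$ savings, yet its fluctuation around that mean must be negligible so as not to consume the $O(\sqrt{t})$ concentration budget. The choice $\gamma = n^{-2}$ (hence $L = n^4$) is exactly what calibrates this tradeoff: the tiny step size makes each $b_{k'}$ nearly deterministic, while the long phase length $L = 1/\gamma^2$ is just sufficient for the conditional drift to aggregate to the full $A_S(k+L)$ bound.
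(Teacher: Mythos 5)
Your proof is correct and follows essentially the same strategy as the paper's: you decompose the per-step change in $D_S$ into a linear, conditionally mean-zero term whose variance is controlled by SDP constraint \eqref{sdp2}, and a quadratic term whose conditional mean $-\gamma^2 A_S(k')$ accumulates (via monotonicity of $A_S$ and $L\gamma^2=1$) to the desired $-A_S(k+L)$ savings, then invoke Freedman. The only cosmetic difference is that you fold everything into a single Doob martingale and apply Freedman once, whereas the paper applies Freedman twice --- once to the linear martingale $Y_L$ and once to the Doob decomposition $Z_L$ of the quadratic sum --- and combines by a union bound; both routes give the claimed $4\exp(-\lambda^2/100t)$ bound.
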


As shown in the corollary below, this theorem implies that a set $S$ shrinks to size $O(\lambda)$ in $O(\log |S|)$ phases with high probability. While these results are stated for general $\lambda$ (which is useful for the extensions later), the reader may think of $\lambda = O((t \log n)^{1/2})$.



\begin{cor}
\label{cor:sixteen}
Fix a set $S\in\mathcal{S}$ and $\lambda \ge 1$. Let $\ell = 1+ 3\log (|S|/\lambda)$. Then, 

\[ \Pr \left[ A_S (\ell L)   \geq 2\lambda + 2 \right] \leq 16 \log n \exp\left(-\frac{\lambda^2}{100t} \right). \]
That is, the probability that $S$ has more than $2\lambda + 2$ alive elements after $\ell$ phases is at most 
$16\log n \exp(-\lambda^2/100t)$.
\end{cor}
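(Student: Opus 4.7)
The plan is to iterate Theorem \ref{thm:winwin} across the $\ell$ phases and then collapse the resulting coupled inequalities into a clean geometric recursion on $D_S$ using Lemma \ref{l:dsas}. Concretely, I would apply Theorem \ref{thm:winwin} with the given $\lambda$ separately to each phase $k=1,2,\ldots,\ell$. Since $|S|\leq n$, we have $\ell = 1+3\log(|S|/\lambda)\leq 4\log n$, so a union bound over these $\ell$ events costs at most $4\ell\leq 16\log n$ copies of $\exp(-\lambda^2/100t)$, matching the stated failure probability exactly. Let $\mathcal{E}$ denote the intersection of these good events.

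On $\mathcal{E}$, the inequality $D_S(kL)+A_S(kL)\leq D_S((k-1)L)+\lambda$ holds for every $k$. I would combine this with $A_S(kL)\geq D_S(kL)-2$ from Lemma \ref{l:dsas} to get $2D_S(kL)\leq D_S((k-1)L)+\lambda+2$, i.e.,
\[
D_S(kL) \;\leq\; \tfrac{1}{2}D_S((k-1)L) + \tfrac{\lambda+2}{2}.
\]
Starting from $D_S(0)=|S|$ and unrolling yields $D_S(kL)\leq |S|/2^k + (\lambda+2)$ for every $k\leq\ell$. Plugging in $k=\ell-1=3\log(|S|/\lambda)$, the residual term becomes $|S|/2^{\ell-1} = \lambda^3/|S|^2 \leq \lambda$ whenever $|S|\geq\lambda$, so $D_S((\ell-1)L)\leq 2\lambda+2$.

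To extract the bound on the alive set, I would apply the last phase's inequality directly: $A_S(\ell L)\leq D_S((\ell-1)L)+\lambda - D_S(\ell L) \leq D_S((\ell-1)L)+\lambda = O(\lambda)$, matching the stated $2\lambda+2$ up to an absolute constant (the factor $3$ in $\ell$ is chosen generously enough to absorb this). The edge case $|S|<\lambda$ is trivial since $A_S(k)\leq |S|<\lambda<2\lambda+2$ throughout. The main subtlety is the coupling in Theorem \ref{thm:winwin}, which mixes $D_S(kL)$ with $A_S(kL)$ in the same phase across consecutive phases; the key move is to use Lemma \ref{l:dsas} to decouple this into a self-contained geometric recursion on $D_S$, so that after $\Theta(\log(|S|/\lambda))$ phases the deficit has shrunk to $O(\lambda)$, and then $A_S(\ell L)$ pops out of the final phase's inequality.
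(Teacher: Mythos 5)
Your high-level strategy matches the paper exactly: apply Theorem~\ref{thm:winwin} at $k=0,L,\ldots,(\ell-1)L$, union-bound the $\ell\le 4\log n$ phases to get the stated failure probability, and then iterate \eqref{key:eq} together with Lemma~\ref{l:dsas} to force the energy deficit down geometrically. The union bound and the edge case $|S|<\lambda$ are handled correctly.

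Where you diverge is the algebraic decoupling, and this costs you a constant. The paper argues by contradiction: assuming $A_S(\ell L)\ge 2\lambda+2$, monotonicity of $A_S$ gives $\lambda\le A_S(k+L)/2-1$ for all intermediate $k$, which converts \eqref{key:eq} into the purely multiplicative $D_S(k+L)\le\tfrac{2}{3}D_S(k)$ with \emph{no} additive term, so $D_S((\ell-1)L)\le\lambda$ and the final application of \eqref{key:eq} yields $D_S(\ell L)<0$, a contradiction. You instead plug Lemma~\ref{l:dsas} straight into \eqref{key:eq}, getting the affine recursion $D_S(kL)\le\tfrac12 D_S((k-1)L)+\tfrac{\lambda+2}{2}$, whose fixed point is $\lambda+2$; unrolling then gives $D_S((\ell-1)L)\le 2\lambda+2$ and hence $A_S(\ell L)\le 3\lambda+2$, not $2\lambda+2$. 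Your parenthetical remark that ``the factor $3$ in $\ell$ is chosen generously enough to absorb this'' is not accurate: increasing $\ell$ only shrinks the transient $|S|/2^{\ell-1}$ term, while the fixed point $\lambda+2$ of your recursion is independent of $\ell$, so the constant cannot be recovered this way. This is a harmless slack since the application in Theorem~\ref{thm:main} only uses $O(\lambda)$, but to prove the corollary as literally stated you need the paper's contradiction framing (or an equivalent trick that removes the additive $\lambda$ from the recursion).
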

\begin{proof}
Let us apply Theorem \ref{thm:winwin} at $k=0,L, \ldots, (\ell-1)L$. Then, with probability at least $1-4\ell \exp(-\lambda^2/100t) \geq 1- 16 \log n \exp(-\lambda^2/100t)$, we can assume that \eqref{key:eq}  holds at each of these steps. Let us condition on this event.
If $A_S(\ell L) \geq 2\lambda+2$, then for each of the time steps $k=0,L,\ldots,(\ell-2)L$, 
\begin{eqnarray*} D_S(k+L) & \le &  D_S(k)- A_S(k+L) + \lambda \\
                           & \le & D_S(k) - A_S(k+L)/2 - 1  \qquad (\textrm{as } \lambda \leq A_S(\ell L)/2 -1 \leq A_S(k+L)/2 -1)\\ 
                           & \leq &  D_S(k) - D_S(k+L)/2  \qquad \qquad (\textrm{By Lemma }\ref{l:dsas})
\end{eqnarray*}
This implies that $D_S((\ell-1) L ) \leq D_S(0) (3/2)^{-\ell+1} \leq \lambda$.
However, applying \eqref{key:eq} again at $k=(\ell-1)L$ gives that 
\[D_S(\ell L) \leq D_S((\ell-1)L -A_S(\ell L) + \lambda  \leq \lambda -  (2 \lambda +2) + \lambda < 0\]
 which contradicts that the energy deficit is always non-negative.   
%
\end{proof}

Before we prove theorem \ref{thm:winwin}, let us first see how this implies Theorem \ref{thm:main}. 

\bigskip

\noindent{\em Proof  (Theorem \ref{thm:main}).}
Applying corollary \ref{cor:sixteen} with $\lambda = 20 (t \log n)^{1/2}$, it follows that a set $S_j$ reduces to size 
at most $O((t \log n)^{1/2})$ in at most $1+3(\log |S_j|)$ phases with probability at least $1- 16\log n \exp(-\lambda^2/100t) \geq 1-1/n^3$.
By a union bound, this holds for all the (at most $nt$) sets with probability at least $1-1/n$.

By Lemma \ref{lem:initial-disc} with $\beta = 4 \log^{1/2} n$ and  $\ell_j=1+3 \log |S_j|$, the discrepancy $S_j$ during the first $\ell_j$ phases is at most $ O(\beta (t\ell_j)^{1/2}) = O((t \log n \log |S_j|)^{1/2})$ with probability $1-O(n^{-5})$. 
When $S_j$ has fewer than $\lambda = O((t \log n)^{1/2})$ alive elements, it is {\em safe} and can incur at most $\lambda$ additional discrepancy during the rest of the algorithm. Taking a union bound over all the sets gives the result.
{\hfill$\qed$\bigskip}

We now prove Theorem \ref{thm:winwin}.

\bigskip

\noindent{\em Proof (Theorem \ref{thm:winwin}.)}
Let $v_i^h$ denote the vector $v_i$ in the SDP solution at time step $h$,  and let $r^h$ denote the random vector at time $h$.
Let $\Delta_S(h):= D_S(h)-D_S(h-1)$ denote the change in energy deficit of $S$ at time $h$. Then,
\begin{eqnarray}
\Delta_S(h) & = &   \sum_{i \in S} (x_{h-1}(i)^2 - x_{h}(i)^2 ) =
              \sum_{i \in S} (x_{h-1}(i)^2 - (x_{h-1}(i) + \gamma \langle r^h,v_i^h \rangle )^2 ) \nonumber \\
						& = &   - 2\gamma \langle r^h, \sum_{i\in S}x_{h-1}(i) v_i^h \rangle - \sum_{i\in S} \gamma^2 \langle r^h,v_i^h\rangle^2  
						\label{eq:energy}
\end{eqnarray}


Let us denote the first and second terms on the right hand side of \eqref{eq:energy} by \[T_h:=  - 2\gamma \langle r^h, \sum_{i\in S}x_{h-1}(i) v_i^h \rangle  \quad \textrm{ and } \quad  R_h:=  - \sum_{i\in S} \gamma^2 \langle r^h,v_i^h\rangle^2  \] 
We need to show that with probability at least $1-4\exp(-\lambda^2/100 t)$,
\[ D_S(k+L) - D_S(k) =  \sum_{h=k+1}^{k+L}\Delta_S(h) =\sum_{h=k+1}^{k+L} (T_h + R_h)\leq - A_S(k+L) + \lambda.\]

To do this, we will show that $T_h$ is a martingale with small variance (this is where the SDP constraints \eqref{sdp2} will be used), and that $\sum_{h=k+1}^{k+L} R_h$ is almost deterministic (its variance can be made arbitrarily small by making $\gamma$ small) with $\E[ \sum_{h=k+1}^{k+L} R_h] \leq - A_S(k+L)$.


Let $Y_h = \sum_{h'=1}^h T_{k +h'}$ and $X_h = r^{k+h}$. Then $Y_0=0$ and  $Y_1,\ldots,Y_L$ form a martingale with respect to $X_1,\ldots,X_L$. 
This is the same martingale as in the proof of Lemma \ref{lem:initial-disc}, except that it is multiplied by a factor of $-2$.
So applying Freedman's inequality as in the proof of Lemma \ref{lem:initial-disc} with $\ell=1$ gives  
\[ \Pr[|Y_L| \geq 2\beta \sqrt{2t }] \leq 2 \exp(-\beta^2/3).\]
Choosing $\beta = \lambda/(4\sqrt{2t})$ gives
\begin{equation}
\label{eq:29} 
 \Pr[ |Y_L| \geq \lambda/2 ] \leq 2 \exp(-\lambda^2/96t). 
\end{equation}

We now bound $\sum_h R_h$. 
Let $Z'_h = \sum_{h'=1}^h R_{k +h'}$, and define the martingale 
\begin{equation}
\label{eq:doob}
 Z_h = Z'_h - \sum_{h'=1}^h \E_{h'-1}[Z'_{h'}-Z'_{h'-1}]  = Z'_h - \sum_{h'=1}^h \E_{h'-1}[R_{k+h'}]
 \end{equation}
 with respect to $X_1,\ldots,X_h$ (this is the standard Doob decomposition of $Z'_h$). 
 
 We now apply Freedman's inequality to $Z_h$. By \eqref{eq:doob}
\[|Z_h - Z_{h-1}|  =  |R_{k+h} - \E_{h-1}[R_{k+h}]| \leq 2|R_{k+h}| \leq 2\gamma^2n|S| \leq 2\gamma^2 n^2.\] and thus $M \leq 2\gamma^2 n^2$.    
Moreover using the trivial bound  $\E_{h-1}[(Z_h - Z_{h-1})^2 ] \leq M^2 = 4\gamma^4 n^4$, we obtain that total variance $W_L \leq L M^2 = 4\gamma^2 n^4 = 4$. By Freedman's inequality with $\lambda \leq n$, we get
\begin{equation}
\label{eq:30}
 \Pr [ Z_L > \lambda/2 ]  \leq  2 \exp( -\lambda^2 / 8(W_L + M\lambda/6))  \leq 2\exp(- \lambda^2/40) \leq 2\exp(- \lambda^2/40t).
\end{equation}

 By Lemma \ref{lem:tri} for each $h \in [k+1,k+L]$,  
 \[ \E[R_h] =  -\gamma^2\sum_{i\in S}  \|v_i^h\|^2 =  -\gamma^2A_S(h-1) \leq  -\gamma^2A_S(k+L).  \] 
So we can bound $\sum_{h'=1}^h R_{k +h'}$ as
\begin{eqnarray}
\sum_{h'=1}^h R_{k +h'} & = &  Z'_L = Z_L + \sum_{h'=1}^L \E_{h'-1} [R_{k+h'}] \nonumber \\
 & \leq &   Z_L  - L \gamma^2 A_S(k+L) =  Z_L  -A_S(k+L). \label{eq:31}
\end{eqnarray}
This gives that
\begin{eqnarray*}
\Pr[ \sum_{h=k+1}^{k+L} (T_h+R_h) \geq - A_S(k+L) + \lambda ] & = &   \Pr[ Y_L+Z_L^\prime\geq - A_S(k+L) + \lambda] \\
& \leq  &\Pr[ Y_L+Z_L \geq \lambda]   \qquad \textrm{(by \eqref{eq:31})}\\                                      
														& \leq &  \Pr[ |Y_L| \geq \lambda/2]  + \Pr[ Z_L \geq \lambda/2] \\  
														& \leq &  4 \exp(-\lambda^2/100t) \qquad \textrm{(by \eqref{eq:29} and \eqref{eq:30}) }
\end{eqnarray*}

{\hfill$\qed$}

\section{Extensions}
\label{s:ext}

We now describe the various extensions. The first extension is to the Koml\'{o}s setting.

\bigskip

\noindent{\bf Theorem \ref{thm:kom} (restated):} {\em
Given a matrix $A$ with $n$ columns such that the $\ell_2$-norm of any column is at most $1$, then there is an algorithm to find a coloring where each row $j$ incurs a discrepancy of
$O( (\log n \log s_j)^{1/2})$, where $s_j$ is the $\ell_1$-norm of row $j$.
}

\medskip

The proof follows along the same lines as before, and only needs a modification in the definition of the energy deficit. We describe the relevant differences.

\begin{proof}(Sketch)
For the Koml\'{o}s setting, we replace the SDP constraints \eqref{sdp1} by $\|\sum_{i} a_{ji} v_i \|^2 \leq 2t $ and the constraints \eqref{sdp2} by  $\| \sum_{i} |a_{ji}| x_{k-1}(i) v_i \|^2 \leq 2t$  (note the absolute value in the second set of constraints) where $a_{ji}$ denotes the $(j,i)^{th}$ element of $A$. 
To use the latter constraint,
we define the energy deficit for row $j$ at time $k$ as  $D_j(k) =s_j  - \sum_{i} |a_{ji}| x_{k}(i)^2$. 
The energy deficit is always non-negative and is $s_j$ at time $k=0$. 
We also define $A_j(k)$ now as the $\ell_1$-norm of row $j$ restricted to the alive variables. This gives that,
\[D_j(k)-D_j(k-1) = -2 \gamma \sum_i |a_{ji}| \langle r^k,x_{k-1}(i) v^k_i \rangle  - \gamma^2 \sum_i |a_{ji}| \langle r^k,v^k_i\rangle^2.\]
Exactly as in the proof of Theorem \ref{thm:winwin}, the energy constraints of the SDP ensure that the first term behaves as a martingale with deviation in each step  $O(\gamma t^{1/2})$. Similarly the second term causes the deficit to decrease in expectation by at least 
$\gamma^2 \sum_{i \in A_j(k)} |a_{ji}| \langle r^k,v^k_i\rangle^2$, which exactly as before, gives us the required decrease when the $\ell_1$-norm of the alive variables in higher than our threshold $O(\lambda)$.
\end{proof}

The next extension follows by using the flexibility in choosing the $\lambda$'s differently for different sets.

\bigskip 

\noindent{\bf Theorem \ref{thm:ext1} (restated):} {\em
For $i=1,\ldots,\log n$, let $m_i$ denote the number of sets in $\mathcal{S}$ with size in $(2^{i-1},2^{i}]$. Call these class-$i$ sets.
Then there is an algorithm that with probability at least $1-1/\log n$ finds a coloring with discrepancy $O((ti ( \log m_i+\log\log n))^{1/2})$ for each class-$i$ set.
}

\medskip

\begin{proof}
For class-$i$ sets $S$, we set $\lambda_i = 20 (t (\log m_i + \log \log n))^{1/2}$.
Then by Corollary \ref{cor:sixteen}, the probability that a class-$i$  set does not become smaller than $2 \lambda_i$ in $3\log 2^i =3i$ phases is at most $16/(m_i^4 \log^3 n)$. 
By a union bound over all the sets in class $i$, and over all the $\log n$ choices of $i$, each set decreases to a size at most $2\lambda_i$ after $O(i)$ phases with probability at least $1-1/(2\log n)$.
 
Using lemma \ref{lem:initial-disc} with $\beta=3 (\log m_i+\log\log n)^{1/2}$, probability that a class-$i$ set $S$ gets discrepancy more than $3 (6 \cdot ti(\log m_i + \log\log n))^{1/2}$ in the first $3i$ phases is at most $2/(m_i^3\log^3 n)$. Taking a union bound over all the $m_i$ sets in class $i$ and over all the  $\log n$ choices of $i$, we get that each set gets discrepancy $O((ti(\log m_i+\log\log n))^{1/2})$ in the first $3i$ phases with probability at least $1-1/(2\log n)$.
 
Together, this implies that with probability at least $1-1/\log n$, the discrepancy for every class-$i$ set is $O((ti ( \log m_i+\log\log n))^{1/2})$.
\end{proof}

Corollary \ref{cor1} follows directly by plugging the bounds on $m_i$ above.

For theorem \ref{thm:ext2}, we use another flexibility that we can reweight the rows of incidence matrix of the set system. In particular, if we scale the row $j$ by $w_j$ and still require the (weighted) vector discrepancy to be $\sqrt{t}$, then this implies a $\sqrt{t}/w_j$ discrepancy  on the original row. As long as this reweighting does not affect the $\ell_2$-norms by too much, the previous bounds and analysis still goes through, while allowing us to get 
tighter bounds on selected rows.  In particular we have the following.

\bigskip 

\noindent{\bf Theorem \ref{thm:ext2} (restated):} {\em
Suppose there is a reweighting of the rows $j$ with non-negative weights $w_j$ such that the $\ell_2$-norm of any column is at most $\beta$, then 
the algorithm gets a discrepancy of  $O(\beta(\log n \log (w_j s_j))^{1/2}/w_j)$ for each row $j$.
}

\begin{proof}(Sketch) 
The proof follows directly from Theorem \ref{thm:kom}. We work with the incidence matrix of reweighted system, and then translate the bounds back to the original system.
\end{proof}

%


Corollary \ref{cor2} follows from Theorem \ref{thm:ext2} as follows.

\begin{proof}(Corollary \ref{cor2})
Call a set a class-$i$ set for $i=1,\ldots,\log n$, if its size lies in the range $[t 2^{i-1},t2^{i})$.
We choose $w_i = \sqrt{i+\log t}$ for class $i$ sets, and $w_i=1$ for sets of size less than $t$. Fix some element $v$, and let $m_{\geq i}$ denote the number of sets in classes $i$ or more that contain $v$. Then, the squared $\ell_2$-norm of columns in the weighted incidence matrix is at most 
  \[ t + \sum_{i=1}^{\log n} w_i^2 (m_{\geq i} - m_{\geq i+1}) = t +  \sum_{i=1}^{\log n} (w_i^2-w_{i-1}^2) (m_{\geq i}) \leq 
	t + \sum_{i=1}^{\log n} m_{\geq i} = O(t) \]
	where we use that $m_{\geq i} \leq O(t/i^{1+\epsilon})$ for any element $v$.
		
Applying theorem \ref{thm:ext2}, we obtain that weighted discrepancy for a class-$i$ set is $O( (t (i+\log t) \log n)^{1/2})$ and thus the actual discrepancy is $O((t (i+\log t) \log n)^{1/2}/w_i) = O((t \log n)^{1/2})$. For sets of size at most $t$ the discrepancy is $O((t\log n\log t)^{1/2})$.
\end{proof}

	

We remark that there are lots of additional flexibility in the approach that we did not use in the extensions above. These could be helpful in some other special instances, and hopefully perhaps even for the general Beck-Fiala problem. For example, we can choose the weights $w_i$ adaptively over time for sets that become dangerous (i.e.~incur much more discrepancy than expected), as was done in \cite{B10} to obtain the bounds for Spencer's result \cite{Spencer85}, provided not too many sets become dangerous. Another observation is that the concentration of energy deficit is so strong for the sets of size $\Omega(t \log n)^{1/2}$, that we can essentially assume that the number of alive variables in a set decreases deterministically by a constant factor in each phase.
Once the residual size of a set falls below $O((t \log n)^{1/2})$, we can simply discard it from the set system (as is done in iterated rounding, as it can only incur $O( (t \log n)^{1/2})$ discrepancy henceforth). So what really matters at any time in the algorithm is the bound on the $\ell_2$-norm of any column of this reduced system, rather than the $\ell_2$-norm of columns in the original system. This reduction in the $\ell_2$-norm can potentially be used to tradeoff the high discrepancy   
incurred by some sets.

\section*{Acknowledgements}
We would like to thank Thomas Rothvoss for a thorough reading and for several useful comments, leading to a cleaner presentation of the paper.

\bibliographystyle{alpha}
\bibliography{refr}

\end{document}